\theoremstyle{plain}
\newtheorem{theorem}{Theorem}[section]
\newtheorem{lemma}[theorem]{Lemma}
\newtheorem{proposition}[theorem]{Proposition}
\newtheorem{remark}[theorem]{Remark}
\theoremstyle{definition}
\theoremstyle{remark}
\numberwithin{equation}{section}
\newcommand{\ep}{\varepsilon}
\newcommand{\ffi}{\varphi}
\newcommand{\R}{\mathbb{R}}
\newcommand{\PP}{\mathbb{P}}
\newcommand{\EE}{\mathbb{E}}
\newcommand{\T}{\mathcal{T}}
\newcommand{\LL}{\mathcal{L}}
\newcommand{\mbf}{\mathbf}
\newcommand{\h}{\tilde h_\ep}
\title[A diffusion limit for a test particle in a random distribution of scatterers]{A diffusion limit for a test particle in a random distribution of scatterers}
\author[G. Basile]
{G. Basile}
\address[Giada Basile]{Dipartimento di Matematica ``Guido Castelnuovo'', Sapienza Universit\`a di Roma, P.le Aldo Moro 5, I-00185 Roma, Italy}
\email[G. Basile]{basile@mat.uniroma1.it}
\author[A. Nota]
{A. Nota}
\address[Alessia Nota]{Dipartimento di Matematica ``Guido Castelnuovo'', Sapienza Universit\`a di Roma, P.le Aldo Moro 5, I-00185 Roma, Italy}
\email[A. Nota]{nota@mat.uniroma1.it}
\author[M. Pulvirenti]
{M. Pulvirenti}
\address[Mario Pulvirenti]{Dipartimento di Matematica ``Guido Castelnuovo'', Sapienza Universit\`a di Roma, P.le Aldo Moro 5, I-00185 Roma, Italy}
\email[M. Pulvirenti]{pulvirenti@mat.uniroma1.it}
\begin{document}
\begin{abstract}
We consider a point particle moving in a random distribution of obstacles
described by a potential barrier.
We show that,  in a weak-coupling regime, under a diffusion limit
suggested by the potential itself, the probability distribution of
the particle converges to the solution of the heat equation.  The
diffusion coefficient is given by the Green-Kubo formula associated to
the generator of the diffusion process dictated by the linear Landau
equation.
\end{abstract}

\maketitle
\section{Introduction}
The evolution of the density of a test particle moving in a configuration of obstacles
is described at mesoscopic level by linear kinetic equations. 
They are obtained from the microscopic Hamiltonian
dynamics under a kinetic scaling of space and time, namely $t\to \ep t$, $x\to \ep x$ and a suitable rescaling of the density of the obstacles and the intensity of the interaction.
Accordingly to the resulting frequency of collisions, the mean free path of the particle can have or not macroscopic length and different kinetic equations arise.
Typical examples are the linear Boltzmann equation and the linear Landau equation.

The first rigorous result appeared  in 1969 in the paper of  Gallavotti \cite{G}, who derived a linear Boltzmann equation starting from 
a random distribution of fixed hard scatterers 
  in the Boltzmann-Grad limit (low density), namely when the number of collisions is small, thus the mean free path of 
the particle is macroscopic.  The result was improved by Spohn \cite{S}.
 
In the weak-coupling regime,
when there are very many but weak collisions, a linear Landau equation appears
\begin{equation}\label{Landau}
(\partial_{t}+v\cdot\nabla_{x})f(x,v,t)=B \Delta_{|v|} f(x,v,t),
\end{equation}
where $\Delta_{|v|}$ is the Laplace-Beltrami operator on the $d$-dimensional sphere 
of radius $|v|$. It describes a momentum diffusion, i.e. the velocity process is a Brownian motion on the (kinetic) energy sphere. This intuitively follows from the facts that 
there are many elastic 
  collisions with obstacles isotropically distributed. 
The diffusion coefficient $B$ is proportional to the variance of the transferred momentum in a single collision and depends on the shape of the interaction potential.
 The first  result in this direction was obtained by
Kesten and Papanicolau in 1978 for a particle in $\R^3$ and by D\"urr, Goldstein and Lebowitz in 1987
for a particle in $\R^2$ for sufficiently smooth interaction potentials.

The linear Landau equation yields also in an intermediate scale between low density and weak-coupling regime, namely when the (smooth) interaction potential $\phi$ rescales according to 
$\phi\to \ep^\alpha \phi$, $\alpha\in (0,1/2)$ 
and  the density of the obstacles is of order $\ep^{-2\alpha-(d-1)}$ (\cite{DR}, \cite{K}). 
The limiting cases $\alpha=0$ and $\alpha=1/2$ correspond respectively to the low density limit and the weak-coupling limit.

In the present paper we want to investigate the limit $\ep\rightarrow 0$ in the intermediate case, namely when 
$\alpha>0$ but sufficiently small, for an interaction potential no more smooth given by a circular potential barrier, in dimension two. 
The physical interest of this problem is connected to the geometric optics since the trajectory 
of the test particle is that of a light ray traveling in a medium (say water) in presence of circular drops of a different substance with smaller refractive index (say air). The opposite situation, namely drops of water in a medium of air, can be described as well by the circular well potential.
Our analysis applies also to this case with minor modifications, but we consider only the case
of potential barrier for sake of concreteness.

The novelty of this choice is that in this case the diffusion coefficient $B$ 
diverges logarithmically. Roughly speaking, the
asymptotic equation for the density  of the Lorentz particle reads
\begin{equation}\label{approxLandau}
(\partial_{t}+v\cdot\nabla_{x})f(x,v,t)\sim\,|\log\ep| \,B \Delta_{|v|} f(x,v,t),
\end{equation}
which suggests to look at a longer time scale  $t\to |\log\ep|t$. As expected,  a  diffusion in space arises.

The proof  follows the original constructive idea, due to Gallavotti \cite{G},
for the low-density limit of a hard-sphere system. This approach is based on a suitable change of variables which leads to a Markovian approximation described by a linear Boltzmann equation. This presents some technical difficulties since some of the random configurations lead to trajectories that ``remember'' too much preventing the Markov property of the limit. In the two-dimensional case the probability of those bad behaviors producing memory effects (correlation between the past and the present) is nontrivial. Thus we need to control the unphysical trajectories: we estimate explicitly the set of bad configurations of the scatterers (such as the set of configurations yielding recollisions or interferences) showing that it is negligible in the limit (see \cite{DP}). The control of memory effects still holds for a longer time scale $|\log\ep|$ which allows to get the heat equation from the rescaled linear Boltzmann equation. \\
\indent We remark that the diffusive limit analyzed in the present paper is suggested by the divergence of the diffusion coefficient for the particular choice of the potential we are considering. However the same techniques could work in presence of a smooth, radial, short-range potential $\phi$. Also in this case we obtain a diffusive equation as longer time scale limit of a linear Boltzmann equation (Section 5). This is in the same spirit of \cite{KR} and \cite{LE}.

\section{Main results}
Consider a point particle of mass one in $\R^2$, moving in a random distribution of fixed scatterers whose center are denoted by $c_1,\dots,c_N\in\R^2$. The equation of motion are
\begin{equation}\label{eqmot}
\left\{\begin{array}{ll}
\dot{x}=v&\\
\dot{v}=-\sum_{i=1}^{N}\nabla\phi(|x-c_i|)&,
\end{array}\right.
\end{equation}
where $(x,v)$ denote position and velocity of the test particle, $t$ the time and, as usual, $\dot{A}=\frac{\,dA}{\,dt}$ indicates the time derivative for any time dependent variable $A$.\\
Finally $\phi:\R^{+}\to\R$ is a given spherically symmetric potential.\\
\indent To outline a kinetic behavior of the particle, we usually introduce a scale parameter $\ep>0$, indicating the ratio between the macroscopic and the microscopic variables, and rescale according to 
\begin{equation*}
x\rightarrow\ep x,\; t\rightarrow\ep t,\; \phi\rightarrow\ep^{\alpha}\phi
\end{equation*}
with $\alpha\in[0,1/2]$. Then Eq.ns \eqref{eqmot} become
\begin{equation}\label{scaled}
\left\{\begin{array}{ll}
\dot{x}=v&\\
\dot{v}=-\ep^{\alpha-1}\sum_{i}\nabla\phi(\frac{|x-c_i|}{\ep})&.
\end{array}\right.
\end{equation}
\indent We assume the scatterers  $\mbf{c}_{N}=(c_1,\dots,c_N)$ distributed according to a Poisson distribution of intensity $\mu_{\varepsilon}=\mu\varepsilon^{-\delta}$, where $\delta=1+2\alpha$. This means that the probability density of finding $N$ obstacles in a bounded measurable set $\Lambda\subset\R^2$ is given by
\begin{equation}\label{poisson}
\PP_{\ep}(\,d\mbf{c}_{N})=e^{-\mu_{\ep}|\Lambda|}\frac{\mu_{\varepsilon}^{N}}{N!}\,dc_1,\dots,\,dc_N
\end{equation}
where $|\Lambda|=\text{meas}\Lambda$.\\
\indent Now let $T^t_{\mbf{c}_{N}}(x,v)$ be the Hamiltonian flow solution of Eq.n \eqref{scaled} with initial datum $(x,v)$ in a given sample $\mbf{c}_{N}=(c_1,\dots,c_N)$ of obstacles (skipping the $\ep$ dependence for notational simplicity) and, for a given initial probability distribution $f_0=f_0(x,v)$, consider the quantity
\begin{equation}\label{valatt}
f_{\ep}(x,v,t)=\EE_{\ep}[f_0(T^{-t}_{\mbf{c}_{N}}(x,v))],
\end{equation}
where $\EE_{\ep}$ is the expectation with respect to the measure $\PP_{\ep}$ given by \eqref{poisson}.\\
\indent In the limit $\ep\to 0$ we expect that the probability distribution \eqref{valatt} solves a linear kinetic equation depending on the value of $\alpha$. More precisely if $\alpha=0$ (low-density or Boltzmann-Grad limit) then $f_{\ep}$ converges to $f$, the solution of the following linear Boltzmann equation 
\begin{equation}\label{boltzmann}
(\partial_{t}+v\cdot\nabla_{x})f(x,v,t)=\text{L}f(x,v,t)
\end{equation}
where 
\begin{equation}
\text{L}f(x,v,t)=\mu|v|\int_{-1}^{1}\,d\rho\{f(v')-f(v)\}
\end{equation}
and where 
\begin{equation}
v'=v-2(\omega\cdot v)\omega.
\end{equation}
Here we are assuming $\phi$ of range one i.e. $\phi( r)=0$ if $r>1$, and $\omega=\omega(\rho,|v|)$ is the unit vector obtained by solving the scattering problem associated to $\phi$. This result was proven and discusses in \cite{BBS},\cite{DP},\cite{G},\cite{S}. \\
On the other hand, if $\alpha=1/2$, the corresponding limit, called weak-coupling limit, yields the linear Landau equation (see \cite{DGL} and \cite{K})
\begin{equation}\label{Landau}
(\partial_{t}+v\cdot\nabla_{x})f(x,v,t)=\LL f(x,v,t)
\end{equation}
where
\begin{equation}
\LL f(v)=B \Delta_{|_{S_{|v|}}}, 
\end{equation} 
and 
\begin{equation}
\label{coeff diff}
B=\frac{\pi\mu}{|v|}\int_{0}^{\infty} r^2\hat{\phi}( r)^2\,dr.
\end{equation}
\noindent
Note that $\hat{\phi}$ is real and spherically symmetric.



In the present paper we want to investigate the limit $\ep\rightarrow 0$, in case $\alpha>0$ sufficiently small, when the diffusion coefficient $B$ given by \eqref{coeff diff} is diverging. 
Actually we consider the specific example 
\begin{equation}
\label{pot barr}
\phi( r)=\left\{\begin{array}{ll}
1\quad \text{if}\; r<1&\\
0\quad \text{otherwise}&,
\end{array}\right.
\end{equation}
namely a circular potential barrier.\\
\indent 
For a   potential of the form \eqref{pot barr}
a simple computation  shows that $B$ defined in \eqref{coeff diff} diverges logarithmically. Therefore we are interested in characterizing the asymptotic behavior of $f_{\ep}(x,v,t)$, given by \eqref{valatt}, under the scaling illustrated above.
The main result of the present paper can be summarized  in the following theorem.

\begin{theorem}\label{th:main th}
Suppose $f_0\in C_0(\R^2\times\R^2)$ a continuous, compactly supported initial probability density. Suppose also that $|D_{x}^kf_0|\leq C$, where $D_x$ is any partial derivative with respect to $x$ and $k=1,2$.
Finally assume $\alpha\in (0,1/8)$. The following statements hold
\begin{enumerate}
\item[1)] if $\mu_{\ep}=\ep^{-2\alpha-1}$, for all $t\in (0,T]$,  $T>0$,
\begin{equation*}
\lim_{\ep\to 0}f_{\ep}(x,v,t)=\langle f_0\rangle:=\frac 1{2\pi}\frac 1{|v|}\int_{S_{|v|}} 
{f_0(x,v)\,dv}.
\end{equation*}
The convergence is in $L^2(\R^2\times S_{|v|} )$.

\item[2)] if $\mu_{\ep}=\frac{\ep^{-2\alpha-1}}{|\log\ep|}$, for all $t\in (0,T]$, $T>0$,  
\begin{equation*}
\lim_{\ep\to 0}f_{\ep}(x,v,t)=f(x,v,t),
\end{equation*}
where $f$ solves the Landau equation \eqref{Landau} with a renormalized diffusion coefficient 
\begin{equation}
\label{diff coeff}
B:=\lim_{\ep\to 0}{\frac{\mu_\ep}{2}\ep\,|v|\int_{-1}^{1}{\theta^2(\rho)\,d\rho}}.
\end{equation}
The convergence is in $L^2(\R^2\times S_{|v|} )$. \\

\item[3)] if $\mu_{\ep}=\ep^{-2\alpha-1}$, defining $F_{\ep}(x,v,t):=f_{\ep}(x,v,t |\log\ep|)$, for all $t\in [0,T)$, $T>0$, 
\begin{equation*}
\lim_{\ep\to 0}F_{\ep}(x,v,t)=\rho(x,t), 
\end{equation*}
where $\rho$ solves the following heat equation
\begin{equation}
\left\{
 \begin{array}{l}\vspace{0.2cm}
\partial_t\varrho=D\Delta\varrho\\ 
\varrho(x,0)=\langle f_0\rangle, 
\end{array} \right.
\end{equation}
with $D$ given by the Green-Kubo formula
\begin{equation}\label{GK}
D=\frac{1}{\mu}|v|\int_{S_{|v|}}{v\cdot\big(-\Delta_{|_{S_{|v|}}}^{-1}\big)v\,dv}=\frac{2\pi}{\mu}|v|^2 \int_{0}^{\infty}\EE\big[v\cdot v(t,v)\big]\,dt, 
\end{equation}
where $v(t,v)$ is the stochastic process dictated by the generator of the Landau equation starting from $v$ and $\EE[\cdot]$ denotes the expectation  with respect to the invariant measure 
on $S_{|v|}$.
The convergence is in $L^2(\R^2\times S_{|v|} )$.
\end{enumerate}
\end{theorem}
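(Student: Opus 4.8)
The plan is to follow the constructive scheme of Gallavotti \cite{G}, replacing the true mechanical evolution by a Markovian one governed by a linear Boltzmann equation, and then reading off the three regimes from the asymptotics of that Boltzmann flow.

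First I would expand $f_\ep$ in a series over the number of scatterers met along the backward trajectory: using the Poisson law \eqref{poisson} and integrating out the centers $c_1,\dots,c_N$, one writes $f_\ep(x,v,t)$ as a sum over the collision number $n$ of integrals over ordered collision times and impact parameters. For the barrier \eqref{pot barr} a single collision is explicit --- the particle enters the disk, refracts, and leaves with the same speed deflected by an angle $\theta(\rho)$ depending only on $\rho$ and $|v|$, with $\theta=O(\ep^\alpha)$ because the rescaled barrier height is $\ep^\alpha$. I would then introduce the comparison dynamics $f_\ep^{\mathrm{LB}}$, the solution of a linear Boltzmann equation whose collision kernel is this single-scatterer cross section and whose collision frequency is proportional to $\mu_\ep$; its series has the same structure but with statistically independent, rejuvenated collisions, so the two series can be matched term by term.

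The main obstacle is estimating the difference of the two series, namely the memory effects that destroy the Markov property. The dangerous configurations are recollisions (the trajectory returns to an already visited scatterer), interferences (overlapping scatterers, or the trajectory simultaneously grazing two of them), and the degenerate grazing geometries. Following \cite{DP} I would bound the measure of each bad set explicitly by exploiting the two-dimensional geometry, showing that every such event costs a positive power of $\ep$; the delicate point --- and the reason for the hypothesis $\alpha\in(0,1/8)$ --- is that these gains must still beat the competing large factors, the density $\ep^{-2\alpha-1}$, the growing collision number, and, for statement 3), the long time $|\log\ep|$. Checking that the surviving power of $\ep$ stays positive after all these are accounted for is where the bulk of the technical work lies.

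Granting the Markovian reduction, the three statements follow from the behaviour of $L_\ep$. A computation of the mean square deflection shows that $\mu_\ep|v|\int_{-1}^{1}\theta^2(\rho)\,d\rho$ carries a logarithmic divergence, equivalently the coefficient \eqref{coeff diff} diverges since $\hat\phi$ decays only like $r^{-3/2}$ in $\R^2$. In regime 2) the factor $|\log\ep|^{-1}$ in $\mu_\ep$ cancels this divergence, and Taylor expanding the gain term in the small angle $\theta$ turns $L_\ep$ into the Landau operator $\LL$ with the finite coefficient \eqref{diff coeff}, whence $f_\ep^{\mathrm{LB}}$, and hence $f_\ep$, converges to the Landau solution. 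In regime 1) the diffusion strength is instead of order $|\log\ep|\to\infty$, so the velocity equilibrates instantly on $S_{|v|}$ while the induced spatial motion, of size $\sim|v|^2/|\log\ep|$ over a time $O(1)$, is negligible, giving $f_\ep\to\langle f_0\rangle$. Regime 3) is the matched scaling combining the two effects: the velocity relaxes fast, but the diffusive time $|\log\ep|$ is exactly what is needed to make the spatial displacement of order one. I would prove it by a Hilbert expansion $F_\ep=\varrho+|\log\ep|^{-1}f^{(1)}+\cdots$, in which the leading order forces $F_\ep$ into the kernel of the collision operator (densities independent of the velocity direction) and the solvability condition at the next order produces $\partial_t\varrho=D\Delta\varrho$, the inverse Laplace--Beltrami resolvent in the Green--Kubo formula \eqref{GK} being precisely the operator inverting the collision operator in that solvability condition; the prescribed initial datum $\langle f_0\rangle$ is reached through an initial layer governed by the mechanism of regime 1).
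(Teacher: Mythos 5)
Your proposal follows essentially the same route as the paper: the Gallavotti series with the change of variables to a Markovian linear Boltzmann approximation, explicit measure estimates of the overlap/recollision/interference sets (the origin of the hypothesis $\alpha\in(0,1/8)$, and still valid up to times of order $|\log\ep|$), the Taylor expansion of the collision operator in the small deflection angle yielding the Landau operator with the renormalized coefficient for item 2), instantaneous velocity relaxation to $\langle f_0\rangle$ for item 1), and a Hilbert expansion with an initial layer producing the heat equation with the Green--Kubo coefficient for item 3). The only cosmetic difference is that you perform the Hilbert expansion directly on the rescaled Boltzmann solution, whereas the paper first replaces the Boltzmann operator by the Landau operator $\LL$ via an $L^2$ comparison and then expands the rescaled Landau solution $g_\eta$; the content is the same.
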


Some comments to Theorem \ref{th:main th} are in order. As we shall prove in Section 4, the asymptotic behavior of the mechanical system we are considering is the same as the Markov process ruled by the linear Landau equation with a diverging factor in front of $\LL$. This is equivalent to consider the  limit in the Euler scaling  of the  linear Landau equation, which is trivial. The system quickly thermalizes to the local equilibrium just given by $\langle f_0\rangle$. This is point 1).

To detect something non-trivial we have to exploit longer times in which the local equilibrium starts to evolve (according to the diffusion equation), see point 3).
Note however that, rescaling differently the density of the Poisson process, we can recover the kinetic picture given by Landau equation (with a renormalized diffusion coefficient $B$) as in \cite{DR}, see point 2).

We finally remark that this picture is made possible because the recollisions set (see below for the precise definition) is negligible, as established in Section 4.
We believe that the present result could be recovered also in high-density regimes $\alpha\in\big(\frac1 8,\frac 1 2\big]$, namely also when the recollisions are not negligible anymore. However in this case different ideas and techniques are indeed necessary.

The plan of the paper is the following. In the next Section we illustrate our strategy and establish some preliminary results. In Section 3 we prove Theorem 1.1. Finally in Section 4 we prove a basic Lemma showing that our non-Markovian system can indeed be approximated by a Markovian one, easier to handle with.

\section{Strategy}
We follow the explicit approach in {\cite{G}}, {\cite{DP}} and {\cite{DR}}.\\
By  \eqref{valatt} we have, for $(x,v)\in\R^2\times\R^2$, $t>0$, 
\begin{equation}
\label{formula1}
f_{\ep}(x,v,t)=e^{-\mu_{\ep}|B_t(x,v)|}\sum_{N\geq 0}\frac{\mu_{\ep}^{N}}{N!}\int_{B_t(x,v)^N}d\mbf{c}_{N}\, f_0(T^{-t}_{\mbf{c}_{N}}(x,v))
\end{equation}
where $T^t_{\mbf{c}_{N}}(x,v)$ is the Hamiltonian flow generated by the Hamiltonian
\begin{equation}
\frac{1}{2}v^2+\ep^{\alpha}\sum_{j}\phi\left(\frac{|x-c_j|}{\ep}\right)
\end{equation}
where $\phi$ is given by \eqref{pot barr}, and initial datum $(x,v)$. Finally $B_t(x,v)=B(x,|v|t)$, where here and in the following, $B(x,R)$ denotes the disk of center $x$ and radius $R$.\\
The explicit solution to the equation of motion is obtained by solving the single scattering problem by using the energy and angular momentum conservation (see figure below).

\begin{figure}[ht]
\centering
\includegraphics[scale= 0.8]{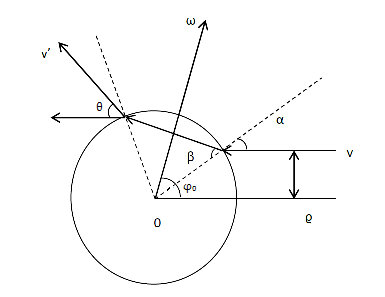}
\caption{Scattering}\label{fig:1}
\end{figure}

Here we represent the scattering of a particle entering in the ball 
\begin{equation*}
B(0,1)=\{x\;\text{s.t.}\; |x|<1\}
\end{equation*} 
toward a potential barrier of intensity $\phi(x)=\ep^{\alpha}$.

We have an explicit expression for the refractive index 
\begin{equation}
\label{refractive index }
n_{\ep}=\frac{\sin\alpha}{\sin\beta}=\frac{|\bar{v}|}{|v|}=\sqrt{1-\frac{2\ep^{\alpha}}{v^2}},
\end{equation}
where $v$ is the initial velocity, $\bar v$ the velocity inside the barrier, $\alpha$ the angle of incidence and $\beta$ the angle of refraction.
The scattering angle is $\Theta=\pi-2\ffi_0=2(\beta-\alpha)$ and the impact parameter is $\rho=\sin\alpha$. (See Appendix \ref{scatt} for a detailed analysis of the scattering problem.)

\begin{figure}[h]
\centering
\includegraphics[scale=0.7]{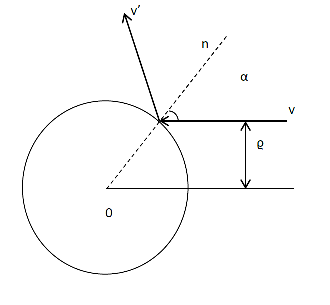}
\caption{Elastic reflection}\label{fig:3}
\end{figure}

\begin{remark}Formula \eqref{refractive index } makes sense if $\frac{2\ep^{\alpha}}{v^2}<1$ or $\rho=\sin\alpha<\sqrt{1-\frac{2\ep^{\alpha}}{v^2}}$.\\
When one of such two inequalities is violated, the outgoing velocity is the one given by the elastic reflection.
\end{remark}


\begin{figure}
\centering
\includegraphics[scale=0.7]{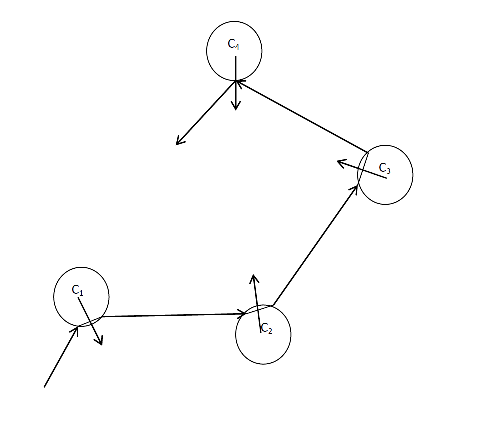}
\caption{A typical trajectory}\label{fig:2}
\end{figure}

\noindent After the scaling 
\begin{equation*}
x\rightarrow\ep x,\; t\rightarrow\ep t
\end{equation*}
the scattering process takes place in a disk of radius $\ep$, but the velocities (and hence the angles) are invariant. A picture of a typical trajectory is given as in Figure 3. Here we are not considering possible overlappings of obstacles. The scattering process can be solved in this case as well. However, as we shall see, this event is negligible because  of the moderate densities we are considering.\\
\indent Coming back to Eq.n \eqref{formula1}, we distinguish the obstacles of the configuration $\mbf{c}_{N}=c_1\dots c_N$ which, up to the time $t$, influence the motion, called internal obstacles, and the external ones. More precisely $c_i$ is internal if
\begin{equation}
\inf_{-t\leq s\leq 0}|x_{\ep}(s)-c_i|<\ep,
\end{equation}
while $c_i$ is external if 
\begin{equation}
\inf_{-t\leq s\leq 0}|x_{\ep}(s)-c_i|\geq\ep.
\end{equation}
Here $(x_{\ep}(s),v_{\ep}(s))=T_{\mbf{c}}^{s}(x,v)$.

Note that the integration over the external obstacles can be done so that
\begin{equation}
\begin{split}
\label{formula2}
f_{\ep}(x,v,t)&=\sum_{Q\geq 0}\frac{\mu_{\ep}^{Q}}{Q!}\int_{B_t(x,v)^Q}d\mbf{b}_{Q}\, e^{-\mu_{\ep}|\T(\mbf{b}_{Q})|}f_0(T^{-t}_{\mbf{b}_{Q}}(x,v))\\&
\chi(\{\text{the}\;\mbf{b}_{Q}\;\text{are}\;\text{internal}\}).
\end{split}
\end{equation}
Here and in the sequel $\chi(\{\dots\})$ is the characteristic function of the event $\{\dots\}$.\\
Moreover $\T(\mbf{b}_{Q})$ is the tube:
\begin{equation}
\T(\mbf{b}_{Q})=\{y\in B_t(x,v)\;\text{s.t.}\;\exists s\in(-t,0)\;\text{s.t.}\;|y-x_{\ep}(s)|<\ep\}.
\end{equation}
Note that 
\begin{equation}
|\T(\mbf{b}_{Q})|\leq 2\ep|v|t.
\end{equation}

\begin{figure}
\centering
\includegraphics{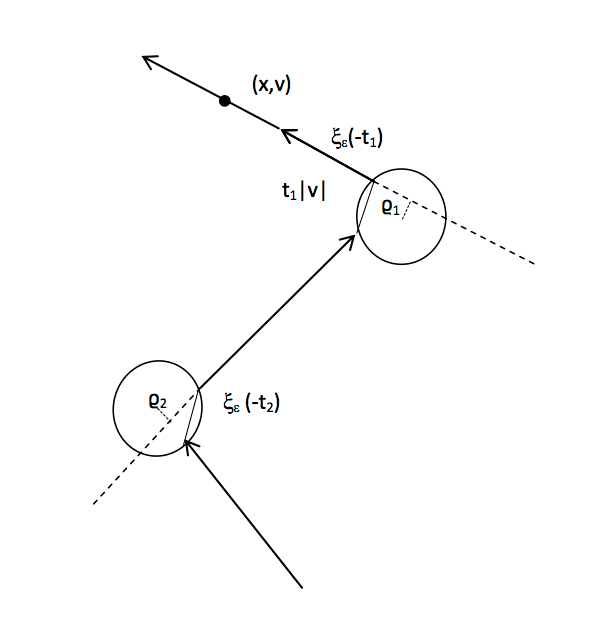}
\caption[size=0.7]{The change of variables}\label{fig:4}
\end{figure}

Instead of considering $f_{\ep}$ we introduce
\begin{equation}
\begin{split}
\label{formula3}
\tilde{f}_{\ep}(x,v,t)&=e^{-2\ep^{-2\alpha}|v|t}\sum_{Q\geq 0}\frac{\mu_{\ep}^{Q}}{Q!}\int_{B_t(x,v)^Q}d\mbf{b}_{Q}\\&
\chi(\{\text{the}\;\mbf{b}_{Q}\;\text{are}\;\text{internal}\})\chi_1(\mbf{b}_{Q})f_0(T^{-t}_{\mbf{b}_{Q}}(x,v)).
\end{split}
\end{equation}
where 
\begin{equation}
\chi_1(\mbf{b}_{Q})=\chi\{\mbf{b}_{Q}\;\text{s.t.}\; b_i\notin B(x,\ep)\;\text{and}\; b_i\notin B(x(-t),\ep)\;\text{for}\;\text{all}\; i=1,\dots,Q\}
\end{equation}
Obviously
\begin{equation}
f_{\ep}\geq \tilde{f}_{\ep}.
\end{equation}

Following \cite{G},\cite{DP},\cite{DR}
we would like to perform the following change of variables
\begin{equation*}
0\leq t_1<t_2<\dots<t_Q\leq t
\end{equation*}
\begin{equation}
\label{change var}
b_1,\dots,b_Q\rightarrow \rho_1,t_1,\dots,\rho_Q,t_Q
\end{equation}
where, after ordering the obstacles $b_1,\dots,b_Q$ according to the scattering sequence, $\rho_i$ and $t_i$ are the impact parameter and the entrance time of the light particle in the protection disk around $b_i$.

More precisely, fixed an impact parameter $\rho$ and an entrance time $t$ we construct $b=b(\rho,t)$, the center of the obstacle. Then we perform the backward scattering and iterate the procedure to construct a trajectory $(\xi_{\ep}(s),\eta_{\ep}(s))$.
However $(\xi_{\ep}(s),\eta_{\ep}(s))=(x_{\ep}(s),v_{\ep}(s))$ (therefore the mapping \eqref{change var} is one-to-one) only outside the following pathological situations.\\
i) \textbf{Overlapping}.\\ If $b_i$ and $b_j$ are both internal then $B(b_i,\ep)\cap B(b_j,\ep)\neq\emptyset$ .\\
ii) \textbf{Recollisions}.\\ There exists $b_i$ such that for $s\in(t_{j},t_{j+1})$, $j>i$, $\xi_{\ep}(-s)\in B(b_i,\ep)$. \\
iii) \textbf{Interferences}.\\ There exists $b_i$ such that $\xi_{\ep}(-s)\in B(b_i,\ep)$ for $s\in(t_{j},t_{j+1})$, $j<i$.\\

\begin{figure}
\centering
\includegraphics[scale=0.8]{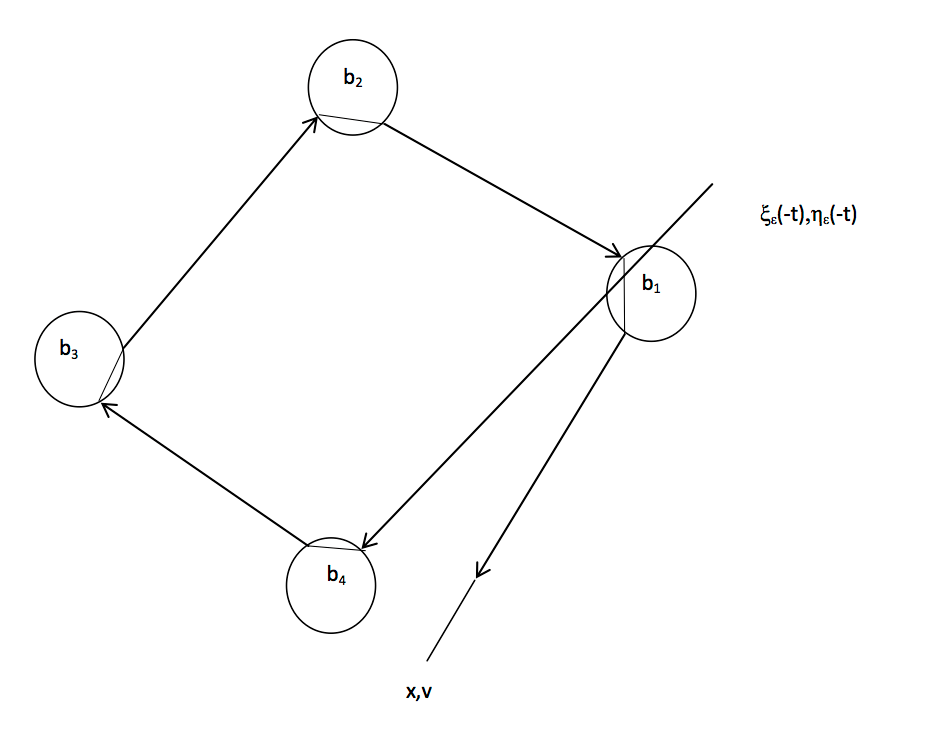}
\caption[scale=0.7]{Recollisions}\label{fig:5}
\end{figure}

\begin{figure}[htbp]
\centering
\includegraphics[scale=0.8]{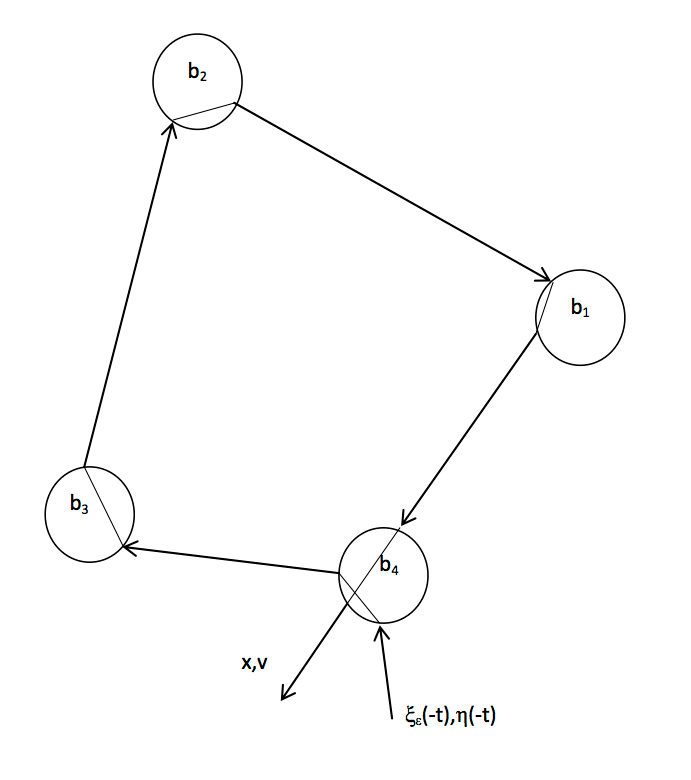}
\caption[scale=0.5]{Interferences}\label{fig:6}
\end{figure}

We simply skip such events by setting
\begin{equation*}
\chi_{ov}=\chi(\{\mbf{b}_Q\;\text{s.t.}\;\text{i)}\;\text{is}\;\text{realized}\}),
\end{equation*}
\begin{equation*}
\chi_{rec}=\chi(\{\mbf{b}_Q\;\text{s.t.}\;\text{ii)}\;\text{is}\;\text{realized}\}),
\end{equation*}
\begin{equation*}
\chi_{int}=\chi(\{\mbf{b}_Q\;\text{s.t.}\;\text{iii)}\;\text{is}\;\text{realized}\}),
\end{equation*}
and defining
\begin{equation}
\begin{split}
\label{formula4}
\bar{f}_{\ep}(x,v,t)&=e^{-2\ep^{-2\alpha}|v|t}\sum_{Q\geq 0}\mu_{\ep}^{Q}\int_{0}^{t}dt_1\dots\int_{0}^{t_{Q-1}}dt_Q\\&
\int_{-\ep}^{\ep}d\rho_1\dots\int_{-\ep}^{\ep}d\rho_Q\,\chi_1(1-\chi_{ov})(1-\chi_{rec})(1-\chi_{int})f_0(\xi_{\ep}(-t),\eta_{\ep}(-t)).
\end{split}
\end{equation}
Note that $\bar{f}_{\ep}\leq\tilde{ f}_{\ep}\leq f_{\ep}$. Note also that in \eqref{formula4} we have used the change of variables \eqref{change var} for which, outside the pathological sets i), ii), iii), $T^{-t}_{\mbf{b}_Q}(x,v)=(x_{\ep}(-t),v_{\ep}(-t))$.\\
\indent Next we remove $\chi_1(1-\chi_{ov})(1-\chi_{rec})(1-\chi_{int})$ by setting 
\begin{equation}
\begin{split}
\label{formula5}
\bar{h}_{\ep}(x,v,t)&=e^{-2\ep^{-2\alpha}|v|t}\sum_{Q\geq 0}\mu_{\ep}^{Q}\int_{0}^{t}dt_1\dots\int_{0}^{t_{Q-1}}dt_Q \\&
\int_{-\ep}^{\ep}d\rho_1\dots\int_{-\ep}^{\ep}d\rho_Q\, f_0(\xi_{\ep}(-t),\eta_{\ep}(-t)).
\end{split}
\end{equation}
\indent We can prove:
\begin{proposition}\label{th:prop}
\begin{equation*}
\bar{f}_{\ep}(t)=\bar{h}_{\ep}(t)+\ffi_1(\ep,t)
\end{equation*}
where $\|\ffi_1(\ep,t)\|_{L^1}\to 0$ as $\ep\to 0$ for all $t\in[0,T].$
\end{proposition}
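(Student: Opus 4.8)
The plan is to show that the difference $\bar f_\ep - \bar h_\ep$ is controlled by the total measure of the pathological configurations, and that this measure vanishes in the $L^1$ norm as $\ep\to 0$. Comparing \eqref{formula4} and \eqref{formula5}, the two expressions differ only by the insertion of the cutoff factor $\chi_1(1-\chi_{ov})(1-\chi_{rec})(1-\chi_{int})$. Hence
\begin{equation*}
\ffi_1(\ep,t) = \bar f_\ep(t) - \bar h_\ep(t) = -e^{-2\ep^{-2\alpha}|v|t}\sum_{Q\geq 0}\mu_\ep^Q\int_{0}^{t}\!dt_1\cdots\!\int_{0}^{t_{Q-1}}\!dt_Q\int_{-\ep}^{\ep}\!d\rho_1\cdots\!\int_{-\ep}^{\ep}\!d\rho_Q\,\big(1-\chi_1(1-\chi_{ov})(1-\chi_{rec})(1-\chi_{int})\big)f_0(\xi_\ep(-t),\eta_\ep(-t)).
\end{equation*}
Since $0\le 1-\chi_1(1-\chi_{ov})(1-\chi_{rec})(1-\chi_{int})\le (1-\chi_1)+\chi_{ov}+\chi_{rec}+\chi_{int}$ by a union bound on the complementary events, I would bound $|\ffi_1|$ by the sum of four nonnegative contributions, each obtained by replacing that factor with one of the indicators and using $|f_0|\le\|f_0\|_\infty$. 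Integrating the resulting expression in $x$ and $v$ and using $\|f_0\|_{L^1}$ finiteness, the task reduces to estimating, for each pathological event, the expected number (weighted by the Poisson factors) of obstacle configurations realizing it.

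The key structural step is that the change of variables \eqref{change var} turns the $Q$-fold $b$-integral into the iterated time-ordered integral with the $e^{-2\ep^{-2\alpha}|v|t}$ prefactor matching the normalization, so that each term is a convergent series in $Q$ dominated by the free-flight exponential. I would treat the four events in turn. The $\chi_1$ term forces an obstacle into one of the two disks $B(x,\ep)$ or $B(x(-t),\ep)$ of area $\pi\ep^2$; with density $\mu_\ep\sim\ep^{-2\alpha-1}$ this costs a factor $O(\mu_\ep\ep^2)=O(\ep^{1-2\alpha})\to 0$. The overlapping term $\chi_{ov}$ similarly requires two of the $Q$ obstacles to lie within distance $2\ep$, contributing $O(\mu_\ep^2\ep^2 t^2)$ relative weight, again vanishing. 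The genuinely delicate terms are $\chi_{rec}$ and $\chi_{int}$: here I would follow the geometric analysis of \cite{DP}, \cite{DR}. For a recollision, after a scattering at time $t_i$ the backward trajectory must re-enter a previously visited disk; the set of impact parameters and intermediate scattering data forcing this is small because the scattering deflection is of order $\ep^\alpha$ (from \eqref{refractive index }, $\Theta=2(\beta-\alpha)$ is $O(\ep^\alpha)$ for typical $\rho$), so the trajectory must travel a time of order $\ep^{-\alpha}$ to bend back, and the recollision disk then occupies a measure shrinking faster than the density grows, provided $\alpha<1/8$. The interference term is estimated by the same mechanism applied to the relative positions of non-consecutive obstacles.

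The main obstacle I anticipate is the recollision/interference estimate, precisely because the smallness of the deflection angle is a double-edged sword: small deflections make single recollisions improbable, but they also mean the particle stays ballistic over long stretches, so one must carefully track how the geometric constraint on $(\rho_i,t_i)$ propagates through the composed scattering map and verify that the accumulated phase-space volume, summed over all pairs $(i,j)$ and over $Q$, is $o(1)$ uniformly on $[0,T]$. The condition $\alpha\in(0,1/8)$ should enter exactly here as the threshold guaranteeing $\mu_\ep^2$ times the recollision cross-section tends to zero; quantitatively one expects a bound of the form $C(t)\,\ep^{\gamma(\alpha)}$ with $\gamma(\alpha)>0$ iff $\alpha<1/8$. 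Once each of the four contributions is shown to be $O(\ep^{\gamma})$ in $L^1(\R^2\times\R^2)$ uniformly for $t\in[0,T]$, summing them yields $\|\ffi_1(\ep,t)\|_{L^1}\to 0$, completing the proof. I would borrow the detailed volume computations from \cite{DP} rather than reprove them, adapting only the $\ep$-powers to the present scaling $\mu_\ep=\ep^{-2\alpha-1}$.
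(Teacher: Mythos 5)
Your overall architecture is the paper's: the same union bound $1-\chi_1(1-\chi_{ov})(1-\chi_{rec})(1-\chi_{int})\leq (1-\chi_1)+\chi_{ov}+\chi_{rec}+\chi_{int}$ (formula \eqref{error}), the same Markov expectation over the $(\rho_k,t_k)$ variables, and the same treatment of the first two events, where smallness comes from restricting one time integration to an interval of length $2\ep/|v|$. Two caveats there. First, your overlap bookkeeping is off: a weight $O(\mu_\ep^2\ep^2t^2)$ equals $O(\ep^{-4\alpha}t^2)$ under $\mu_\ep=\ep^{-2\alpha-1}$ and does \emph{not} vanish; the correct count (one internal obstacle in the tube, a second within $2\ep$ of it, i.e.\ $t_i-t_{i+1}\leq 2\ep/|v|$) gives $O(\ep^{1-4\alpha}t)$ as in \eqref{Markov error ov}, needing $\alpha<1/4$. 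Second, for recollisions your heuristic (``time of order $\ep^{-\alpha}$ to bend back'') is not the paper's mechanism: since each deflection is $O(\ep^\alpha)$, a recollision with $b_i$ during $(t_j^+,t_{j+1})$ forces an intermediate outgoing velocity $\eta_k^+$ nearly orthogonal to $\eta_j^+$ (condition \eqref{vel.cdz}); freezing all variables except the entrance time $t_{k+1}$, the rigidity of the two branches confines that single integration to an interval of length $\leq 2\ep/|v|$, and summing over $(i,j,k)$ and $Q$ against the Poisson weights gives $C|v|^3t^3\ep^{1-8\alpha}$, which is exactly where $\alpha<1/8$ enters, as you anticipated. Deferring these computations to \cite{DP} is legitimate in spirit, since the paper itself adapts them.

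The genuine gap is the interference term. Your claim that it is ``estimated by the same mechanism applied to the relative positions of non-consecutive obstacles'' does not go through: the recollision gain comes from integrating over the entrance time of an obstacle created \emph{after} $b_i$ but \emph{before} the recolliding branch, whereas for an interference the offending trajectory segment $(t_j,t_{j+1})$, $j<i$, is already laid down when $b_i$ is created, so there is no single free time variable whose integration the event restricts to measure $O(\ep)$; the constraint falls jointly on $(\rho_i,t_i)$ and the whole past trajectory. The paper needs a separate idea here: time reversal plus Liouville. A backward interference is a forward recollision, $\chi_{int}(x,v)=\chi_{rec}(x_0,v_0)$ with $(x_0,v_0)=\big(\xi_\ep(-t),\eta_\ep(-t)\big)$, and since the flow preserves phase-space volume one changes variables and bounds the interference integral over $B(0,M)$ by the already-proved recollision bound on the larger ball $B(0,M(1+t))$, obtaining \eqref{interferences}, i.e.\ $C\ep^{\gamma}M^7(1+t)^7$. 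Note this step only yields an \emph{integrated} bound rather than the pointwise bounds available for $(1-\chi_1)$, $\chi_{ov}$, $\chi_{rec}$ --- which is precisely why Proposition \ref{th:prop} is stated in $L^1$. Without this duality argument (or an equally effective substitute) your proposal does not close.
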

\begin{remark}\label{th:remarkprop}
Proposition \ref{th:prop} still holds for longer times, namely:
\begin{equation*}
\|\ffi_1(\ep,t)\|_{L^1}\underset{\ep\to 0}{\longrightarrow} 0\quad\forall t\in[0,|\log\ep|T],\quad T>0.
\end{equation*}
\end{remark}
We postpone the proof of the above Proposition in the last Section.\\
Next we consider the limiting trajectory $\bar{\xi}_{\ep}(s),\bar{\eta}_{\ep}(s)$ obtained by considering the collision as instantaneous.\\
More precisely, for the sequence $t_1,\dots,t_Q$ $\rho_1,\dots\rho_Q$ consider the sequence $v_1,\dots,v_Q$ of incoming velocities before the Q collisions. Then
\begin{equation}\label{limitino trajectory}
\left\{\begin{array}{ll}
\bar{\xi}_{\ep}(s)=x-v(t-t_1)-v_1(t_1-t_2)\dots-v_Qt_Q&\\
\bar{\eta}_{\ep}(s)=v_Q.&
\end{array}\right.
\end{equation}
\indent We define 
\begin{equation}
\begin{split}
\label{formula6}
h_{\ep}(x,v,t)&=e^{-2\ep^{-2\alpha}|v|t}\sum_{Q\geq 0}\mu_{\ep}^{Q}\int_{0}^{t}dt_1\dots\int_{0}^{t_{Q-1}}dt_Q\\&
\int_{-\ep}^{\ep}d\rho_1\dots\int_{-\ep}^{\ep}d\rho_Q\, f_0(\bar{\xi}_{\ep}(-t),\bar{\eta}_{\ep}(-t)).
\end{split}
\end{equation}

Due to the Lipschitz continuity of $f_0$ we can assert that
\begin{equation}
\bar{h}_{\ep}(x,v,t)=h_{\ep}(x,v,t)+\ffi_2(x,v,t)
\end{equation}
where 
\begin{equation}\label{phi2}
\sup_{x,v,t\in[0,T]}|\ffi_2(x,v,t)|\leq C\ep^{1-2\alpha}T.
\end{equation}
For more details see \cite{DP}, Section 3.
As matter of facts, since we realize that $h_{\ep}$ is the solution of the following Boltzmann equation
\begin{equation}
\label{Boltzmann epsilon}
(\partial_t+v\cdot\nabla_x)h_{\ep}(x,v,t)=\text{L}_{\ep}h_{\ep}(x,v,t),
\end{equation}
where 
\begin{equation}
\text{L}_{\ep}h(v)=\mu\ep^{-2\alpha}|v|\int_{-1}^{1}\,d\rho\{h(v')-h(v)\},
\end{equation}
we have reduced the problem, thanks to Proposition 1, to the analysis of a Markov process which is an easier task.\\


\section{Proof of the main theorem}
\indent 
Let be $\eta_\ep=|\log \ep|$.
We rewrite the linear Boltzmann equation \eqref{Boltzmann epsilon} in the following way
\begin{equation}\label{BE0}
\big(\partial_t +v\cdot\nabla_x\big)h_{\ep}(x,v,t)=\eta_\ep\,\tilde{ \text{L}}_{\ep} h_{\ep}(x,v,t),
\end{equation}
where $\tilde L_\ep=L/\eta_\ep$, namely
\begin{equation}\label{def:L_ve}
\tilde{ \text{L}}_{\ep}f (v)=\mu|v|\frac{\ep^{-2\alpha}}{|\log\ep|}\int_{-1}^1d\rho\big[f(v')-f(v)\big],\qquad f\in L^1(\R^2).
\end{equation}

\indent
We will show that  for $\eta\to \infty$ we get a trivial result (Theorem \ref{th:main th}, item 1)), 
then we should look at the solution for times $\eta_\ep t$, namely in the diffusive scaling.
Denoting by 
$\h:=h_\ep(x , v, \eta_\ep t)$, where $h_\ep$ solves \eqref{BE0},
$\h$ solves 
\begin{equation}\label{BE1}
\big(\partial_t  +\eta_\ep\,v\cdot\nabla_x\big)\h=\eta_\ep^2\, \tilde{ \text{L}}_{\ep}\h.
\end{equation}
\indent
It is convenient to introduce the Cauchy problem associated to the following rescaled Landau equation:
\begin{equation}\label{rescLandau}
\left\{
 \begin{array}{l}\vspace{0.2cm}
\big(\partial_t +\eta\, v\cdot\nabla_x\big)g_{\eta}(x,v,t)=\eta^2  \LL g_\eta(x,v,t),\\ 
g_\eta(t=0)=f_0.
\end{array} \right.
\end{equation}
where $\LL=\frac{\mu}{2}\frac 1 {|v|}\Delta_{|_{S_{|v|}}}$.
We observe preliminarily that eq. \eqref{rescLandau} propagates the regularity of the derivatives with respect to the $x$ variable and, due to the presence of $\LL$, gains regularity  with respect to the transverse component of the velocity. 
Indeed, for any fixed $|v|$, denoting by $S_{|v|}$ the circle of radius $|v|$, under the hypothesis of Theorem \ref{th:main th} on $f_0$, the solution $g_\eta:\,\R^2\times S_{|v|}\to \R^{+}$ satisfies the bounds
\begin{equation}\label{bounds}
|D_{x}^{k}g_\eta(x,v)|\leq C,\quad |D_{v}^{h}g_\eta(x,v)|\leq C\quad\forall k\leq 2,\,h\geq 0,
\end{equation}
$\forall t\in (0,T]$,  where $C=C(f_0,T)$ and $D_{v}$ is the derivative with respect to the transverse component of the velocity. In particular the solutions of \eqref{rescLandau} we are considering are classical.\\
\indent Before analyzing 
the asymptotic behavior of the solution  of \eqref{rescLandau} we first need  a preliminary Lemma.

\begin{lemma}\label{th:preliminary}
Let $\langle g_\eta\rangle$ be the average of $g_\eta$ with respect to the invariant measure $\nu$,
namely 
$
\langle g_\eta\rangle :=\frac 1 {2\pi} \frac 1 {|v|}\int_{S_{|v|}}
dv\,g_\eta(x,v).
$
Under the hypothesis of Theorem \ref{th:main th} 
\begin{itemize}
\item[(1)]$\quad g_\eta-\langle g_\eta\rangle\underset{\eta\to \infty}{\longrightarrow 0} \quad \text{in}\quad L^{\infty}((0,T];L^2(\R^2\times S_{|v|})).$
\end{itemize}
Moreover, setting $t_{\eta}=\frac{1}{\eta^{\omega}}$ for $\omega>2$ then
\begin{itemize}
\item[(2)]$\quad g_\eta(t_\eta)-\langle f_0\rangle\underset{\eta\to\infty}{\longrightarrow 0} \,\text{in}\quad L^2(\R^2\times S_{|v|}),$
\end{itemize}
where $\langle f_0\rangle=\frac 1 {2\pi}\frac 1 {|v|}\int_{S_{|v|}}dv\, f_0.$
\end{lemma}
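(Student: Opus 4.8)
The plan is to exploit the spectral structure of $\LL$ on the velocity circle together with an energy estimate of hypocoercive type. First I would record that, for fixed $|v|$, the operator $\LL=\frac{\mu}{2|v|}\Delta_{|_{S_{|v|}}}$ is self-adjoint and nonpositive on $L^2(S_{|v|})$, its kernel being the constants; the orthogonal projection onto this kernel is precisely the averaging $g\mapsto\langle g\rangle$. Since $S_{|v|}$ is a circle of radius $|v|$, the Poincaré (spectral-gap) inequality $\|g-\langle g\rangle\|_{L^2(S_{|v|})}^2\le |v|^2\|\nabla_S g\|_{L^2(S_{|v|})}^2$ holds, which translates into the coercivity bound $-\langle w,\LL w\rangle\ge\lambda_1\|w\|^2$ with $\lambda_1=\frac{\mu}{2|v|^3}$ for every mean-zero $w$. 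I then split $g_\eta=\langle g_\eta\rangle+w_\eta$, with $w_\eta:=g_\eta-\langle g_\eta\rangle$, and aim at a decay estimate for $\|w_\eta\|$.

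The core is a differential inequality for $\|w_\eta(t)\|_{L^2(\R^2\times S_{|v|})}^2$. Testing \eqref{rescLandau} against $w_\eta$ and using $\LL\langle g_\eta\rangle=0$ together with the orthogonality of $w_\eta$ to the range of $\langle\cdot\rangle$, the free-transport self-interaction $\langle w_\eta,v\cdot\nabla_x w_\eta\rangle$ vanishes after integrating by parts in $x$; one is left with the cross term $-\eta\langle w_\eta,v\cdot\nabla_x\langle g_\eta\rangle\rangle$ and the dissipation $\eta^2\langle w_\eta,\LL w_\eta\rangle\le-\eta^2\lambda_1\|w_\eta\|^2$. The delicate point is the large prefactor $\eta$ in the cross term; I would absorb it by Young's inequality against the dissipation, at the cost of a term proportional to $\frac{1}{\lambda_1}\|\nabla_x\langle g_\eta\rangle\|^2$. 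Here the regularity propagation behind \eqref{bounds} is essential: since the coefficients of \eqref{rescLandau} do not depend on $x$, each $\partial_{x_i}g_\eta$ solves the same equation, so $\|\nabla_x g_\eta(t)\|_{L^2}$ is nonincreasing and hence $\|\nabla_x\langle g_\eta\rangle\|_{L^2}\le\|\nabla_x g_\eta\|_{L^2}\le\|\nabla_x f_0\|_{L^2}$ uniformly in $\eta$ and $t$. This yields
\[
\frac{d}{dt}\|w_\eta\|^2\le-\eta^2\lambda_1\|w_\eta\|^2+\frac{C}{\lambda_1}\|\nabla_x f_0\|^2,
\]
and Gronwall's lemma gives the quantitative bound $\|w_\eta(t)\|^2\le e^{-\eta^2\lambda_1 t}\|f_0-\langle f_0\rangle\|^2+\frac{C}{\eta^2}$.

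Item (1) then follows: for every $t$ bounded away from $0$ the exponential factor is negligible and the remainder is $O(\eta^{-2})$, so $\|g_\eta(t)-\langle g_\eta(t)\rangle\|\to0$ uniformly on time intervals separated from the initial layer. For item (2) I would evaluate the same estimate at $t=t_\eta$: the choice of $t_\eta$ must balance two competing time scales, namely $\eta^2 t_\eta\to\infty$ (so that the initial fluctuation is exponentially damped and $\|w_\eta(t_\eta)\|\to0$) while $\eta t_\eta\to0$ (so that the rescaled transport has not yet displaced the spatial profile). It then remains to compare $\langle g_\eta(t_\eta)\rangle$ with $\langle f_0\rangle$; projecting \eqref{rescLandau} onto the kernel gives $\partial_t\langle g_\eta\rangle=-\eta\,\nabla_x\cdot\langle v\,w_\eta\rangle$, whence $\|\langle g_\eta(t_\eta)\rangle-\langle f_0\rangle\|\le\eta\int_0^{t_\eta}\|\nabla_x\cdot\langle v\,w_\eta\rangle\|\,ds$, which is controlled by $\eta\,t_\eta$ times the uniform bounds and therefore vanishes. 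Combining the two limits gives $g_\eta(t_\eta)\to\langle f_0\rangle$ in $L^2$.

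The main obstacle is precisely the cross term with its large factor $\eta$: the whole argument hinges on dominating it by the $\eta^2$-strong dissipation without generating uncontrolled $x$-derivatives, which is why the uniform $L^2$ bound on $\nabla_x g_\eta$ (consistent with \eqref{bounds}) is indispensable. A secondary subtlety, relevant only for item (2), is the competition of the two time scales identified above, whose balance is made explicit by the Gronwall estimate and by the flux identity $\partial_t\langle g_\eta\rangle=-\eta\,\nabla_x\cdot\langle v\,w_\eta\rangle$.
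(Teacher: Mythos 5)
For item (1) your proof is essentially the paper's: the same decomposition $w_\eta=g_\eta-\langle g_\eta\rangle$, the same spectral-gap coercivity of $\LL$ on mean-zero functions, and the same energy estimate closed by a uniform bound on $\nabla_x g_\eta$. The only differences are cosmetic: the paper collects the inhomogeneity into $\ffi=-\big(\eta\,v\cdot\nabla_x\langle g_\eta\rangle+\partial_t\langle g_\eta\rangle\big)$, bounds $\|\ffi\|_{L^2}\leq C\eta$ via \eqref{bounds}, and integrates by Duhamel, where you absorb the cross term by Young's inequality and justify the uniform gradient bound by differentiating the equation in $x$; both yield $\|w_\eta(t)\|\leq e^{-\lambda\eta^2 t}\|w_\eta(0)\|+C/\eta$. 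Your explicit caveat that this gives uniformity only away from the initial layer is accurate (the paper's own estimate has the same feature, despite the $L^{\infty}((0,T];L^2)$ phrasing).

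For item (2), however, your argument does not prove the statement as given, and the gap is concrete: your mechanism requires $\eta^2 t_\eta\to\infty$ so that the exponential factor damps $\|w_\eta(t_\eta)\|$, but the lemma prescribes $t_\eta=\eta^{-\omega}$ with $\omega>2$, for which $\eta^2 t_\eta=\eta^{2-\omega}\to 0$ and hence $e^{-\lambda\eta^2 t_\eta}\to 1$. Your Gronwall bound then gives only $\|w_\eta(t_\eta)\|\leq \|f_0-\langle f_0\rangle\|+C/\eta$, which does not vanish for general $f_0$. Your two balancing conditions $\eta^2 t_\eta\to\infty$ and $\eta\, t_\eta\to 0$ select the regime $1<\omega<2$ — the opposite of the hypothesis — so what you have written is a correct and physically natural proof of the $\omega\in(1,2)$ variant, not of the lemma. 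The paper's route for $\omega>2$ is genuinely different: it exploits precisely the fact that by time $t_\eta$ the dynamics has not yet acted, proving $\|g_\eta(t_\eta)-f_0\|_{L^2}\leq \eta^{-\omega}\big(\eta|v|\,\|\nabla_x f_0\|+\eta^2\|\LL f_0\|\big)\to 0$ (estimate \eqref{L3}, which is where $\omega>2$ enters), and then triangulates $\|g_\eta(t_\eta)-\langle f_0\rangle\|\leq \sup_{t\in(0,T]}\|g_\eta-\langle g_\eta\rangle\|+c\,\|g_\eta(t_\eta)-f_0\|$, invoking item (1) for the first term. (Note that this invocation of (1) uniformly down to $t=0^+$ is exactly where your initial-layer caveat bites, so the paper's combination is itself delicate; but the operative point for your proposal is that the damping step is incompatible with the prescribed $t_\eta$, and your flux identity $\partial_t\langle g_\eta\rangle=-\eta\,\nabla_x\cdot\langle v\,w_\eta\rangle$, while correct and sufficient for the second half, cannot repair the first.)
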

\begin{proof}
Let $R_{\eta}=g_\eta-\langle g_\eta\rangle$. We have
\begin{equation}\label{L1}
\big(\partial_t +\eta\, v\cdot\nabla_x\big)R_{\eta}(x,v,t)=\eta^2 \LL R_\eta(x,v,t)+\ffi,
\end{equation}
where
\begin{equation}\label{L2}
\begin{split}
\ffi&=-\big(\eta\, v\cdot\nabla_x\langle g_\eta\rangle+\partial_{t}\langle g_\eta\rangle\big)\\&
=\eta \Big(\frac{1}{2\pi}\frac 1 {|v|}\int_{S_{|v|}}v'\cdot\nabla_x g_{\eta}\,dv'-v\cdot\nabla_x \langle g_{\eta}\rangle\Big).
\end{split}
\end{equation}
We can estimate the last quantity by \eqref{bounds}:
$$\sup_{t\leq T}\|\ffi\|_{L^2}\leq C\eta\|\nabla_x g_{\eta}\|_{L_2}\leq C\eta.$$ 
Therefore by \eqref{L1} we have 
\begin{equation*}\begin{split}
\frac{1}{2}\frac{d}{dt}\|R_{\eta}(t)\|_{L^2}^2&=\eta^2(R_{\eta},\LL R_{\eta})+(R_{\eta},\ffi)\\&
\leq -\eta^2 \lambda\|R_{\eta}\|^2_{L^2}+\|R_{\eta}\|_{L^2}\|\ffi\|_{L^2}.
\end{split}
\end{equation*}
where $\lambda$ is the first positive eigenvalue of $\LL$. Here we used that $R_{\eta}\perp 1$ in $L^2$.
Hence
\begin{equation*}
\begin{split}
\|R_{\eta}(t)\|_{L^2}&\leq e^{-\eta^2 \lambda t}\|R_{\eta}(0)\|_{L^2}+\int_{0}^{t}\,ds\, e^{-\eta^2 \lambda (t-s)}\|\ffi(s)\|_{L^2}\\&
\leq e^{-\eta^2 \lambda t}\|R_{\eta}(0)\|_{L^2}+\frac{C}{\eta}(1-e^{-\eta^2 \lambda t}), 
\end{split}
\end{equation*}
so that (1) is proven.\\
\indent To prove (2) observe that, thanks to the fact $\LL$ is negative, we have 
\begin{equation*}
\begin{split}
\frac{1}{2}\frac{d}{dt}\|g_{\eta}(t)-f_0\|_{L^2}^2
&
\leq -\eta(g_{\eta}-f_0, v\cdot\nabla_x f_0)
+\eta^2 (g_{\eta}-f_0,\LL f_0)\\
& \leq   \|g_{\eta}-f_0\|_{L^2}\big(\eta |v|\,\|\nabla_x f_0\| +\eta^2 \|\LL f_0\|\big).
\end{split}
\end{equation*}
\noindent
Therefore
\begin{equation}\label{L3}
\|g_{\eta}(t_\eta)-f_0\|_{L^2}\leq \frac 1 {\eta^{\omega}}\big(\eta |v|\,\|\nabla_x f_0\| +\eta^2 \|\LL f_0\|\big),
\end{equation}
which vanishes as $\eta\to\infty$.
Finally, recalling that $\langle f_0\rangle=\frac 1 {2\pi}\frac 1 {|v|}\rho_0$, we have 
\begin{equation*}
\begin{split}
\|g_{\eta}(t_\eta)-\langle f_0\rangle\|_{L^2}&
\leq \sup_{t\in (0,T]} \|g_{\eta}-\langle g_\eta\rangle\|_{L^2}+\|\langle g_\eta(t_\eta)\rangle-\langle f_0\rangle\|_{L^2}\\&
\leq \sup_{t\in (0,T]}\|g_{\eta}-\langle g_\eta\rangle\|_{L^2}+c\|g_{\eta}(t_\eta)-f_0\|_{L^2}.
\end{split}
\end{equation*}
By \eqref{L3} and (1) we conclude the proof.
\end{proof}

\begin{lemma}\label{th:diff}
Let $g_\eta$ be the solution of \eqref{rescLandau}.
Under the hypothesis of Theorem \ref{th:main th} for the initial datum $f_0$,  for $\eta\to \infty$ 
$g_\eta$ converges to the solution of the diffusion equation
\begin{equation}
\left\{
 \begin{array}{l}\vspace{0.2cm}
\partial_t\varrho=D\Delta\varrho\\ 
\varrho(x,0)=\langle f_0\rangle, 
\end{array} \right.
\end{equation}
where $\langle f_0\rangle=\frac 1 {2\pi}\frac 1 {|v|}\int_{S_{|v|}}dv\, f_0$ and 
\begin{equation}
D=\frac 1 {\mu} |v|\int_{S_{|v|}} v\cdot \Delta_{|_{S_{|v|}}}^{-1}v\,dv.
\end{equation}
Convergence is in  $L^{\infty}([0,T];L^2(\R^2 \times S_{|v|}))$.
\end{lemma}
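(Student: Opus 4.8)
The statement concerns only the kinetic solution $g_\eta$ of \eqref{rescLandau}, so the argument is purely PDE-analytic. The plan is to run a two-scale (Hilbert) expansion that simultaneously identifies the limiting heat equation and its coefficient, to close the comparison by an $L^2$ energy estimate based on the spectral gap of $\LL$, and to treat the initial layer via Lemma \ref{th:preliminary}. Write $\langle\cdot\rangle$ for the average against the invariant measure on $S_{|v|}$ as in Lemma \ref{th:preliminary}, and recall that $\LL=\frac{\mu}{2}\frac1{|v|}\Delta_{|_{S_{|v|}}}$ is self-adjoint and non-positive on $L^2(S_{|v|})$, with kernel the functions constant on $S_{|v|}$ and spectral gap $\lambda>0$. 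Since $\langle v\rangle=0$, $v\perp\ker\LL$, so the corrector $\chi:=\LL^{-1}v$ is well defined on $(\ker\LL)^{\perp}$; by isotropy $\langle v\otimes\chi\rangle$ is a multiple of the identity, the scalar being the Green--Kubo coefficient $D=-\tfrac12\langle v\cdot\LL^{-1}v\rangle>0$ of the statement (equivalently \eqref{GK}).

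First I would build an approximate solution
\begin{equation*}
g^{\mathrm{app}}_\eta=\varrho+\tfrac1\eta\,\chi\cdot\nabla_x\varrho+\tfrac1{\eta^2}\,g^{(2)},
\end{equation*}
with $\varrho=\varrho(x,t)$ constant on $S_{|v|}$. Substituting into \eqref{rescLandau} and sorting by powers of $\eta$: the $\eta^2$ term vanishes since $\LL\varrho=0$; the $\eta^1$ term vanishes by $\LL\chi=v$; and the $\eta^0$ term $\partial_t\varrho+v\cdot\nabla_x(\chi\cdot\nabla_x\varrho)-\LL g^{(2)}$ is solvable for a mean-zero $g^{(2)}$ exactly when its average vanishes, i.e.
\begin{equation*}
\partial_t\varrho+\langle v\otimes\chi\rangle:\nabla_x\nabla_x\varrho=0,
\end{equation*}
which is precisely $\partial_t\varrho=D\Delta\varrho$. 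I take $\varrho$ to solve this heat equation with datum $\langle f_0\rangle$ and let $g^{(2)}$ solve the remaining elliptic problem on $S_{|v|}$. The residual $E_\eta:=(\partial_t+\eta v\cdot\nabla_x)g^{\mathrm{app}}_\eta-\eta^2\LL g^{\mathrm{app}}_\eta$ then only retains orders $\eta^{-1},\eta^{-2}$, so $\|E_\eta\|_{L^2}=O(1/\eta)$; this uses parabolic regularity of $\varrho$ (bounded $x$-derivatives up to third order) together with smoothness of $\chi,g^{(2)}$ on the compact manifold $S_{|v|}$.

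I would then estimate $w_\eta:=g_\eta-g^{\mathrm{app}}_\eta$, which solves $(\partial_t+\eta v\cdot\nabla_x)w_\eta=\eta^2\LL w_\eta-E_\eta$. Pairing with $w_\eta$, the transport term drops by integration by parts in $x$, and $\eta^2(w_\eta,\LL w_\eta)\le-\eta^2\lambda\|w_\eta-\langle w_\eta\rangle\|_{L^2}^2\le0$, so $\frac{d}{dt}\|w_\eta\|_{L^2}\le\|E_\eta\|_{L^2}$ and Gronwall gives $\sup_{[0,T]}\|w_\eta\|_{L^2}\le\|w_\eta(0)\|_{L^2}+CT/\eta$. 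The structural point, and the reason the scheme closes although the dissipation bounds only the fluctuating part $w_\eta-\langle w_\eta\rangle$, is that the macroscopic part of the residual was annihilated by the solvability condition, leaving a genuine $O(1/\eta)$ forcing.

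The hard part will be the initial layer: since $f_0\neq\langle f_0\rangle$ in general, the initial discrepancy $f_0-\langle f_0\rangle$ is $O(1)$ and relaxes only across the fast layer $t\lesssim\eta^{-2}$, so $\|w_\eta(0)\|_{L^2}$ does not vanish and the comparison cannot be started at $t=0$. To handle this I would start it at $t_\eta=\eta^{-\omega}$, $\omega>2$: by Lemma \ref{th:preliminary}(2), $g_\eta(t_\eta)\to\langle f_0\rangle$ in $L^2$, while $\varrho(t_\eta)\to\langle f_0\rangle$ by continuity of the heat semigroup and $\chi\cdot\nabla_x\varrho,g^{(2)}$ contribute $O(1/\eta)$, whence $\|w_\eta(t_\eta)\|_{L^2}\to0$. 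The energy estimate on $[t_\eta,T]$ then gives $\sup_{[t_\eta,T]}\|w_\eta\|_{L^2}\to0$, and since $t_\eta\to0$ and $g^{\mathrm{app}}_\eta=\varrho+O(1/\eta)$, combining with Lemma \ref{th:preliminary}(1) yields $g_\eta\to\varrho$ in $L^2$ uniformly on $[\delta,T]$ for every $\delta>0$, i.e. away from the vanishing initial layer. The two points demanding care are the uniform-in-$\eta$ higher $x$-regularity of $\varrho$ needed for $\|E_\eta\|=O(1/\eta)$ (obtained from parabolic smoothing on $[t_\eta,T]$, or by a harmless mollification of $\langle f_0\rangle$) and the matching of the estimates across $t_\eta$.
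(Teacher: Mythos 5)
Your proposal is correct and follows essentially the same route as the paper's proof: a second-order Hilbert expansion whose solvability conditions yield the heat equation with the Green--Kubo coefficient $D$, an $L^2$ energy estimate for the remainder using the non-positivity of $\LL$ (the transport term dropping by integration by parts), and Lemma \ref{th:preliminary}, item (2), at the time $t_\eta=\eta^{-\omega}$, $\omega>2$, to dispose of the initial layer for ill-prepared data. The only cosmetic difference is that you restart the energy estimate at $t_\eta$, whereas the paper compares $g_\eta$ with the solution $\bar g_\eta$ issued from $\langle f_0\rangle$ via the contraction property of the semigroup --- the same mechanism, resting on the same lemma.
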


\begin{proof} The proof of the above Lemma is rather straightforward (see e.g. \cite{EP}).\\
\indent
Suppose for the moment that the initial datum depends only on the position variables, namely the initial datum has the form of a local equilibrium.
We assume that $g_\eta$ has the following form
$$g_\eta(x,v,t)=g^{(0)}(x,t)+\frac 1 {\eta}g^{(1)}(x,v,t)+\frac 1 {\eta^2}g^{(2)}(x,v,t)
+\frac 1 {\eta}R_{\eta},$$
where $g^{(i)}$, $i=0,1,2$ are the first three coefficient of a Hilbert expansion in $\eta$,
and $R_{\eta}$ is the reminder.
Comparing terms of the same order in $\eta$ we obtain   the following equations:
\begin{equation*}\begin{split}
&(i)  \,v\cdot\nabla_x g^{(0)}=\frac{\mu} 2 \frac 1 {|v|}\,\Delta_{|_{S_{|v|}}} g^{(1)}\\
&(ii)  \,\partial_t\, g^{(0)}+v\cdot\nabla_x g^{(1)}=\frac{\mu} 2 \frac 1 {|v|}\,\Delta_{|_{S_{|v|}}}
 g^{(2)}\\
&(iii)  \,\big(\partial_t  +\eta\,v\cdot\nabla_x\big)R_{\eta}=\eta^2 \frac{\mu} 2 \frac 1 {|v|}\,\Delta_{|_{S_{|v|}}} R_{\eta}
-A_\eta(t),
\end{split}\end{equation*}
with $A_\eta(t)=A_\eta(x,v, t)=\partial_t g^{(1)}+\frac 1 {\eta} g^{(2)}
+v\cdot\nabla_x g^{(2)}$.
\indent
Since $v\cdot \nabla_x g^{(0)}$ is an odd function of $v$,  the integral with respect to $v$ of the left hand side of (i) vanishes. Then we can invert the operator $\Delta_{|_{S_{|v|}}}$ and set 
$g^{(1)}=\frac 2 {\mu} |v|\Delta_{|_{S_{|v|}}}^{-1}v\cdot\nabla_x g^{(0)}$, where $g^{(1)}$ is an odd function of the velocity.
Now we integrate the second equation with respect to the velocity. By observing that 
$\int_{S_{|v|}} dv\,\Delta_{|_{S_{|v|}}}g^{(2)}=0$, since $dv_{|_{S_{|v|}}}$ is proportional the invariant measure, we obtain
\begin{equation*}
\partial_t \, g^{(0)}+\frac 2 {\mu} |v|\,\int_{S_{|v|}}dv\, v\cdot \nabla_x\big(
\Delta_{|_{S_{|v|}}}^{-1}v\cdot\nabla_x g^{(0)}\big)=0.
\end{equation*}
We define the $2\times 2$ matrix $D$ as $D_{ij}=-\frac 2 {\mu} |v|\int_{S_{|v|}} v_i \Delta_{|_{S_{|v|}}}^{-1}v_j$
and we observe that $D_{ij}=0$ for $i\neq j$ and
$D_{11}=D_{22}=D$, where
$$
D=\frac 1 {\mu}|v|\,\int_{S_{|v|}} dv\,v\cdot\big(-\Delta_{|_{S_{|v|}}}^{-1}  \big)v.
$$
 
Therefore
\begin{equation*}
\partial_t \, g^{(0)}-D\Delta_x g^{(0)}=0,
\end{equation*}
where $g^{(0)}$ satisfies the initial condition $g^{(0)}(x,0)=g(t=0)$. Moreover, the $L^2$-norm of 
$g^{(1)}$ is bounded. If we show that also the $L^2$-norm of $g^{(2)}$ and $R_\eta$ are bounded, we deduce that $g_\eta$ converges to $g^{(0)}$ for $\eta\to\infty$.\\
\indent
From equation $(ii)$ and the diffusion equation for $g^{(0)}$ we derive that the integral with respect 
to $v$ of the left hand side of $(ii)$ vanishes. Therefore we can invert the operator $\Delta_{|_{S_{|v|}}}$ and obtain
\begin{equation*}\begin{split}
g^{(2)}= & \frac 2 {\mu} |v|\,\Delta_{|_{S_{|v|}}}^{-1}\big(\partial_t g^{(0)} +v\cdot \nabla_x(v\cdot\nabla_x)g^{(0)} \big)\\
=& \frac 2 {\mu} |v|\sum_{i,j}\partial_{x_i}\partial_{x_j}g^{(0)} \,\Delta_{|_{S_{|v|}}}^{-1}\big[v_i v_j+D_{ij} \big].
\end{split}\end{equation*}
Therefore the $L_2$-norm of $g^{(2)}$ is bounded.

We derive from equation $(iii)$
\begin{equation*}
\frac 1 2 \partial_t \|R_\eta\|^2 =-\eta^2\, \big(R_\eta,\,-\Delta_{|_{S_{|v|}}} R_\eta \big)-\big(R_\eta,\, A_\eta (t)  \big),
\end{equation*}
where $(\cdot,\,\cdot)$ denotes the scalar product in $L^2$.
Using positivity of $-\Delta_{|_{S_{|v|}}}$ and Cauchy-Schwartz we deduce 
$
\partial_t \|R_\eta\|\leq \|A_\eta\|.
$
Recall the explicit expression for $A_\eta$, namely
$A_\eta=\partial_t g^{(1)}+\frac 1 {\eta} g^{(2)}
+v\cdot\nabla_x g^{(2)}$.
 By direct computation
\begin{equation*}\begin{split}
\partial_t g^{(1)} &= \frac 2 {\mu}|v|\sum_{i,j, k}\partial_{x_i}\partial_{x_j}\partial_{x_k}g^{(0)}\,
 \Big[\frac 2 {\mu}|v|v_i \Delta_{|_{S_{|v|}}}^{-1} v_j -(v_iv_j+D_{i,j}) \Big] \Delta_{|_{S_{|v|}}}^{-1}v_k,
\end{split}\end{equation*}
from which we deduce that the $L^2$-norm of $\partial_t g^{(1)}$ is bounded. Similarly, one can easily show that
the $L^2$-norm of $v\cdot\nabla_x g^{(2)}$ is bounded, and then
$\|A_\eta\|$ is uniformly bounded in $[0,T]$ and $\|R_\eta \|\leq C T$.

To complete the proof we consider more general initial data $f_0$ depending also on the velocity variable.
Let $A:=L-\eta v\cdot\nabla_x$. 
We compare $g_\eta$ with $\bar g_\eta$, the solution  \eqref{rescLandau} with initial datum $\langle f_0\rangle$.
By  the same argument as in  Lemma \ref{th:preliminary}, item (2), we have that $\forall t\geq t_\eta$
\begin{equation*}
\|\bar g_{\eta}(t-t_{\eta})-\bar g_{\eta}(t)\|_{L^2}\leq \frac{C}{\eta^{\omega-2}},
\end{equation*}
where $C$ depends on the $L^2$-norm of $\langle f_0\rangle$ and $\nabla\langle f_0\rangle$.
Since  $g_{\eta}(t)=e^{A t}f_0=e^{A(t-t_\eta)}g_\eta(t_\eta)$ and $\bar g_{\eta}(t-t_\eta)=e^{A(t-t_\eta)}\langle f_0\rangle$ we derive
\begin{equation*}
\|g_{\eta}(t)-\bar g_{\eta}(t-t_{\omega})\|_{L^2}\leq C\|g(t_{\eta})-\langle f_0\rangle\|_{L^2}.
\end{equation*}
Thus, by Lemma \ref{th:preliminary}, item (2), we obtain that $g_{\eta}(t)$ and $\bar g_{\eta}(t)$ have the same asymptotics and this concludes the proof of Lemma \eqref{th:diff}.
\end{proof}

\begin{proposition}\label{th:prop 2}
Let $f_0$ be an initial datum for
$\h$ solution of \eqref{BE1}. Under the hypothesis of Theorem \ref{th:main th}
$\h$
converges to 
$\varrho$ as $\ep\to 0$,
where $\varrho:\R^2\times[0,T]\to\R_+$ is the solution of the diffusion equation
\begin{equation}
\left\{
 \begin{array}{l}\vspace{0.2cm}
\partial_t\varrho=D\Delta\varrho\\ 
\varrho(x,0)=\langle f_0\rangle, 
\end{array} \right.,
\end{equation}
with $\langle f_0\rangle=\frac 1 {2\pi}\frac 1 {|v|}\int_{S_{|v|}}dv\, f_0$.
The diffusion coefficient is $D$ given by the Green-Kubo formula. Convergence is in $L^2(\R^2 \times S_{|v|})$
uniformly in $t\in(0,T]$. 
\end{proposition}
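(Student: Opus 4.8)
The plan is to insert, between the linear Boltzmann dynamics \eqref{BE1} and the heat equation, the rescaled Landau flow of Lemma \ref{th:diff}, and to transfer the conclusion of that Lemma to $\h$. Let $g_\ep$ be the solution of the Cauchy problem \eqref{rescLandau} with $\eta=\eta_\ep=|\log\ep|$, initial datum $f_0$, and with $\LL$ replaced by the \emph{grazing limit} of $\tilde{\text{L}}_\ep$ carrying its exact second moment, namely $\LL_\ep=B_\ep\,\Delta_{|_{S_{|v|}}}$ with $B_\ep=\frac{\mu_\ep}{2}\ep|v|\int_{-1}^1\theta^2(\rho)\,d\rho$ the quantity appearing in \eqref{diff coeff}. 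Since $B_\ep\to B$, and since the Hilbert-expansion proof of Lemma \ref{th:diff} is uniform as long as the collision rate stays bounded away from $0$ and $\infty$, one gets $g_\ep\to\varrho$ in $L^\infty([0,T];L^2)$, the associated Green--Kubo coefficient $D_\ep\to D$ and the corresponding heat solutions converging by continuous dependence on the diffusivity. It then suffices to show $\|\h-g_\ep\|_{L^2}\to 0$ uniformly on $[0,T]$, and the statement follows by the triangle inequality.

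For the last step I would use Duhamel. Both $\h$ and $g_\ep$ are transported by $\eta_\ep\,v\cdot\nabla_x$ and start from $f_0$, so $w_\ep:=\h-g_\ep$ solves $(\partial_t+\eta_\ep v\cdot\nabla_x)w_\ep=\eta_\ep^2\,\tilde{\text{L}}_\ep w_\ep+\eta_\ep^2(\tilde{\text{L}}_\ep-\LL_\ep)g_\ep$, $w_\ep(0)=0$. The operator $\eta_\ep^2\tilde{\text{L}}_\ep-\eta_\ep v\cdot\nabla_x$ generates an $L^2$-contraction: the transport part is skew-adjoint, while $\tilde{\text{L}}_\ep$ is symmetric and nonpositive on $L^2(S_{|v|})$ by micro-reversibility of the elastic collision. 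Hence $\|w_\ep(t)\|_{L^2}\le T\sup_{s\le T}\eta_\ep^2\|(\tilde{\text{L}}_\ep-\LL_\ep)g_\ep(s)\|_{L^2}$, and everything reduces to a grazing-collision estimate for the source.

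The core computation is the grazing expansion. Writing $v'$ as the rotation of $v$ by the deflection angle $\theta(\rho)$ and letting $T$ denote the generator of rotations of $S_{|v|}$ (so $T^2=\Delta_{|_{S_{|v|}}}$), one has $f(v')-f(v)=\theta\,Tf+\tfrac12\theta^2\,T^2f+\tfrac16\theta^3\,T^3f+\dots$. The odd-order contributions integrate to zero by the symmetry $\theta(-\rho)=-\theta(\rho)$; the second-order term is, by the definition of $B_\ep$, exactly $\LL_\ep g_\ep$; and the remainder is bounded by $\mu|v|\tfrac{\ep^{-2\alpha}}{|\log\ep|}\,\|T^3g_\ep\|_\infty\int_{-1}^1|\theta(\rho)|^3\,d\rho$. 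The scattering analysis of Appendix \ref{scatt} gives $\theta(\rho)\simeq\tfrac{2\ep^\alpha}{|v|^2}\,\rho(1-\rho^2)^{-1/2}$ for $|\rho|<n_\ep$, whose grazing singularity, cut off at the critical parameter $n_\ep=\sqrt{1-2\ep^\alpha/|v|^2}$, produces both the logarithmic growth $\int_{-1}^1\theta^2\,d\rho\sim\ep^{2\alpha}|\log\ep|$ responsible for \eqref{diff coeff} and the crucially smaller $\int_{-1}^1|\theta|^3\,d\rho=O(\ep^{5\alpha/2})$. Using the uniform velocity regularity \eqref{bounds} of $g_\ep$, the source is therefore $\eta_\ep^2\cdot O\!\big(\tfrac{\ep^{-2\alpha}}{|\log\ep|}\big)\cdot O(\ep^{5\alpha/2})=O(|\log\ep|\,\ep^{\alpha/2})\to 0$, so $\|w_\ep\|_{L^2}\to 0$.

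I expect the matching of collision rates to be the main obstacle. Comparing $\h$ directly with the fixed-coefficient flow $g_{\eta_\ep}$ of \eqref{rescLandau} fails: one has $B_\ep-B=O(1/|\log\ep|)$ while the prefactor $\eta_\ep^2=|\log\ep|^2$ multiplies it, and since the diffusive scaling forces $\|\Delta_{|_{S_{|v|}}}g_{\eta_\ep}\|_{L^2}=O(1/|\log\ep|)$, the mismatched second-order term contributes an $O(1)$ error that does \emph{not} vanish. Building the exact second moment into $\LL_\ep$ cancels this term identically and leaves only the genuinely subleading third-order remainder, which is what makes the scheme close; this is why one cannot simply quote Lemma \ref{th:diff} for \eqref{rescLandau} as written. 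A secondary point is the short-time interval $[0,t_\ep]$, where the expansion bounds degrade for $v$-dependent $f_0$: this is absorbed exactly as in Lemma \ref{th:preliminary}(2), the fast thermalization on the scale $t_\ep=\eta_\ep^{-\omega}$ rendering the initial layer $L^2$-negligible.
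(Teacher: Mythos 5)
Your proposal is correct, and it shares the paper's overall skeleton: compare $\h$ with a rescaled Landau flow through an $L^2$ energy estimate, exploiting skew-adjointness of the transport term, nonpositivity of $\tilde{\text{L}}_\ep$, Cauchy--Schwartz, and a grazing Taylor expansion of the collision operator. Where you genuinely deviate is the choice of intermediate flow, and your deviation is well motivated. The paper compares $\h$ directly with the solution $g_{\eta_\ep}$ of \eqref{rescLandau} for the \emph{fixed} generator $\LL$, expands to fourth order (odd orders vanishing by $\theta(-\rho)=-\theta(\rho)$), and bounds only the remainder, writing $\eta_\ep^2\|(\tilde{\text{L}}_\ep-\LL)g_{\eta_\ep}\|\le \eta_\ep^2\,\ep^\alpha|\log\ep|^\beta\,\|\Delta^2_{|_{S_{|v|}}}g_{\eta_\ep}\|$; this silently identifies the second-moment coefficient $\frac{\ep^{-2\alpha}}{|\log\ep|}\int_{-1}^1|v'-v|^2\,d\rho$ with its limit. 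Your diagnosis of that step is accurate: the computation of Appendix B yields a convergence rate only $O(1/|\log\ep|)$ (the reflected branch $\ep^{-2\alpha}\int_{n_\ep}^1\theta^2\,d\rho$ alone contributes an $O(1)$ amount before division by $|\log\ep|$), so after the $\eta_\ep^2$ amplification the mismatch term $\eta_\ep^2\,(B_\ep-B)\,\Delta_{|_{S_{|v|}}}g_{\eta_\ep}$ is $O(|\log\ep|)$ under the uniform bounds \eqref{bounds}, and at best $O(1)$ even granting the Hilbert-expansion decay $\|\Delta_{|_{S_{|v|}}}g_{\eta_\ep}\|=O(1/|\log\ep|)$; the paper's ``Therefore'' does not account for this term. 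Your repair --- taking as intermediate flow the Landau equation with the exact $\ep$-dependent coefficient $B_\ep$ of \eqref{diff coeff}, so that the second-order term cancels identically and only the genuinely subleading remainder survives (your estimate $\int_{-1}^1|\theta|^3\,d\rho=O(\ep^{5\alpha/2})$ is consistent with the scattering formulas of Appendix A, giving a source $O(|\log\ep|\,\ep^{\alpha/2})$, of the same quality as the paper's $\eta_\ep^2\ep^\alpha|\log\ep|^\beta$), and then sending $B_\ep\to B$ only at the level of the limiting Hilbert expansion and heat equation, where no diverging prefactor enters --- is sound: the expansion constants and the Green--Kubo coefficient depend continuously on a diffusivity that stays in a compact subset of $(0,\infty)$, and the initial layer for $v$-dependent $f_0$ is absorbed as in Lemma \ref{th:preliminary}, item (2), exactly as at the end of Lemma \ref{th:diff}. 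In short, you prove the same statement by the same energy method, but your two-step comparison closes a step that the paper's one-step comparison leaves open; the only bookkeeping left to you is the normalization between the rotation generator and the Laplace--Beltrami operator ($T^2=|v|^2\Delta_{|_{S_{|v|}}}$ on $S_{|v|}$) and the factor relating $|v'-v|^2=4\sin^2(\theta/2)$ to $\theta^2$, neither of which affects any rate.
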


\begin{proof}
Let $g_{\eta_\ep}$ be solution of \eqref{rescLandau} with $\eta=\eta_\ep:=|\log\ep|$
and initial condition $f_0$. 
We look at the evolution of $\h-g_{\eta_\ep}$, 
namely
\begin{equation*}
\big(\partial_t+\eta_\ep\, v\cdot\nabla_x \big)\big(\h-g_{\eta_\ep})
=\eta_\ep ^2\Big(\tilde{\text{L}}_\ep\h-\LL g_{\eta_\ep} \Big),
\end{equation*}
where $\LL:=\frac{\mu}{2}\frac 1 {|v|}\Delta_{|_{S_{|v|}}}$.
Then we obtain 
\begin{equation*}\begin{split}
\frac 1 2 \,\partial_t \|\h-g_{\eta_\ep}\|^2= &
-\eta_\ep^2 \big(\h-g_{\eta_\ep},\,
- \tilde{\text{L}}_\ep\big[\h-g_{\eta_\ep} \big]  \big)\\
&+\eta_\ep^2
\big(\h-g_{\eta_\ep} ,\, \big[\tilde{\text{L}}_\ep-\LL \big] g_{\eta_\ep} \big),
\end{split}
\end{equation*}
from which, using positivity of $-\tilde{\text{L}}_\ep$ and Cauchy-Schwartz,
\begin{equation*}
\frac 1 2 \,\partial_t \|\h-g_{\eta_\ep}\|\leq \eta_\ep^2\big\|\big(\tilde{\text{L}}_\ep-\LL \big)
g_{\eta_\ep}\big\|.
\end{equation*}
Recalling that
\begin{equation*}
\tilde{\text{L}}_\ep g_{\eta_\ep}=\mu |v|\frac{\ep^{-2\alpha}}{|\log\ep|}\int_{-1}^1 d\rho\,
\big[g_{\eta_\ep}(x, v', t)-g_{\eta_\ep}(x,v,t)  \big],
\end{equation*}
we set
\begin{equation*}\begin{split}
&g_{\eta_\ep}(v')-g_{\eta_\ep}(v)\\
&=  (v'-v)\cdot\nabla_{|_{S_{|v|}}}g_{\eta_\ep}(v)\\
&\quad+\frac 1 2  (v'-v)(v'-v)\otimes (v'-v)\nabla_{|_{S_{|v|}}}\nabla_{|_{S_{|v|}}}g_{\eta_\ep}(v)\\
&\quad+\frac 1 6 (v'-v)\otimes (v'-v)\otimes (v'-v)
\nabla_{|_{S_{|v|}}}\nabla_{|_{S_{|v|}}}\nabla_{|_{S_{|v|}}}g_{\eta_\ep}(v)
+R_{\eta_\ep},\\
\end{split}\end{equation*}
with $R_{\eta_\ep}=\mathcal O (|v-v'|^4)$.
Integrating with respect to $v$ and using symmetry arguments we obtain 
\begin{equation*}\begin{split}
\tilde{\text{L}}_\ep g_{\eta_\ep}=\mu |v|\frac{\ep^{-2\alpha}}{|\log\ep|}\big\{
\frac 1 2 \Delta_{|_{S_{|v|}}}g_{\eta_\ep}\int_{-1}^1 d\rho\,|v'-v|^2
+\int_{-1}^1 d\rho\,R_{\eta_\ep}\big\}.
\end{split}
\end{equation*}
Observe that $|v'-v|^2=4\sin^2\frac{\theta(\rho)}{2}$, then  
by direct computation (see Appendix)
\begin{equation*}
\lim_{\ep\to 0}\,
\frac{\ep^{-2\alpha}}{|\log\ep|}
\int_{-1}^1 d\rho\,|v'-v|^2=2\frac {\alpha}{|v|^2}
\end{equation*}
and
$$
\frac{\ep^{-2\alpha}}{|\log\ep|}\int_{-1}^1d\rho\,
|v-v'|^4=\ep^\alpha|\log\ep|^\beta,\qquad -1<\beta
<\frac  5 2 \alpha -1.$$
Therefore $\big\|\big(\tilde{\text{L}}_\ep-\LL \big)
g_{\eta_\ep}\big\|\leq \ep^\alpha|\log\ep|^\beta\,\|\Delta^2_{|_{S_{|v|}}}g_{\eta_\ep}\|\leq \ep^\alpha|\log\ep|^\beta\, C$,
which vanishes for $\ep\to 0$.

\end{proof}

In order to complete the proof of the item 3) of Theorem \ref{th:main th}, we need to show that   $f_{\ep}(\eta_{\ep} t)$ converges to $\tilde h_\ep(t)$
in $L^2(\R^2\times \R^2)$, for every $t\in [0,T]$. By Proposition \ref{th:prop} and Remark \ref{th:remarkprop} we have that $\bar f_{\ep}(\eta_\ep t)$ defined in \eqref{formula4} converges to 
$\bar h_{\ep}(\eta_\ep t)$, \eqref{formula5}, in $L^1(\R^2\times \R^2)$, for every $t\in [0,T]$. Moreover, using \eqref{phi2}
and the fact that the initial datum has compact support, we have that $\bar h_\ep(\eta_\ep t)$ 
converges to
$\tilde h_\ep(t)$ in $L^1(\R^2\times \R^2)$, for every $t\in [0,T]$. Under hypothesis of Theorem \ref{th:main th}, convergence in $L^1$ norm implies convergence in $L^2$.
 Since $\bar f_{\ep}\leq f_\ep$ and using the fact that  at $t=0$ the equality holds and the  linear Boltzmann equation \ref{BE1} preserves the total mass, then also  
$ f_{\ep}(\eta_\ep t)$ converges to $\tilde h_\ep(t)$ in 
$L^2(\R^2\times S_{|v|})$, for every $t\in [0,T]$. 
 
Now we go back to equation \eqref{BE0}. Using the same strategy of  the proof of Proposition \ref{th:prop 2} we can replace $\tilde L_\ep$ with $\mathcal L$,
and we denote  $\tilde g_\eta $ the solution of
\begin{equation*}
\big(\partial_t+v\cdot\nabla_x\big)\tilde g_\eta=\eta \,\mathcal L \tilde g_\eta,
\end{equation*}
 with initial datum $f_0$. By the same arguments as in Lemma \ref{th:preliminary}, item (i), one can prove that for $\eta\to\infty$ $\tilde g_\eta\to\langle \tilde g_\eta\rangle$ and
$\nabla_x \tilde g_\eta \to \nabla_x\langle \tilde g_\eta\rangle$. We observe that
$$
\partial_t \langle \tilde g_\eta\rangle +\nabla_x \int dv\, \big(\tilde g_\eta-\langle\tilde g_\eta\rangle \big) v=0,
$$
therefore $\langle \tilde g_\eta\rangle$ converges to $\langle \tilde f_0\rangle$ as $\eta\to\infty$, which concludes the proof of item 1).

Proof of item 2) is included in the proof of Proposition \ref{th:prop 2}.

%
%


\section{The control of the pathological sets}
\indent
In this section we prove Proposition \ref{th:prop}.\\
Clearly
\begin{equation}
\label{error}
1-\chi_1(1-\chi_{ov})(1-\chi_{rec})(1-\chi_{int}) \leq (1-\chi_1)+\chi_{ov}+\chi_{rec}+\chi_{int}
\end{equation}
and we estimate separately all the events in the right hand side of  \eqref{error}.\\
\indent We denote by $\xi_{\ep}(s), \eta_{\ep}(s)$ the backward Markov process defined, for $s\in(-t,0)$, in Section 2 and we set
\begin{equation}
\begin{split}
\label{Markov}
\EE _{x,v}^{M}(u)&=e^{-2|v|\ep^{-2\alpha}t} \sum_{Q\geq 0}(2|v|\mu_{\ep}t)^Q \int_{0}^{t}dt_1\dots\int_{0}^{t_{Q-1}}dt_Q\\&
\int_{-\ep}^{\ep}d\rho_1\dots\int_{-\ep}^{\ep}d\rho_Q \,u(\xi_{\ep},\eta_{\ep}),
\end{split}
\end{equation}
for any measurable function $u$ of the process $(\xi_{\ep},\eta_{\ep})$. We have
\begin{equation}
\begin{split}
\label{error1}
&\EE _{x,v}^{M}((1-\chi_1)f_0(\xi_{\ep}(-t),\eta_{\ep}(-t)))\\&
\leq \frac{2\ep}{|v|}e^{-2|v|\ep^{-2\alpha}t} \| f_0\|_{L^{\infty}} \sum_{Q>0}\frac{(2|v|\ep^{-2\alpha})^Q}{(Q-1)!} t^{Q-1}\\&
\leq 4 \| f_0\|_{L^{\infty}}\ep^{1-2\alpha}t\leq C\ep^{\gamma}t,
\end{split}
\end{equation}
for $\gamma>0$, $\alpha<1/2$ and $\ep$ sufficiently small.\\ Here and in the sequel $t$ is allowed to behave as $c|\log(\ep)|$.\\
\indent Estimate \eqref{error1} is obvious. Indeed if $\chi_1=0$ the first or the last collision must satisfy either $|t-t_1|\leq 2\ep/|v|$ or $t_Q\leq 2\ep/|v|$. Hence \eqref{error1} follows easily.\\
A similar argument can be used to estimate $\chi_{ov}$. Indeed if $\chi_{ov}=1$ it must be $t_{i}-t_{i+1}\leq2\ep/|v|$ for some $i=1,\dots,(Q-1)$. Therefore proceeding as before
\begin{equation}
\begin{split}
\label{Markov error ov}
&\EE _{x,v}^{M}(\chi_{ov}f_0(\xi_{\ep}(-t),\eta_{\ep}(-t)))\\&
\leq \frac{2\ep}{|v|}e^{-2|v|\ep^{-2\alpha}t} \| f_0\|_{L^{\infty}} \sum_{Q> 1}(Q-1)\frac{(2|v|\ep^{-2\alpha})^Q}{(Q-1)!} t^{Q-1}\\&
\leq \frac{2\ep}{|v|} \| f_0\|_{L^{\infty}}t(2|v|\ep^{-2\alpha})^2\leq C|v|\ep^{\gamma}t,
\end{split}
\end{equation}
for some $\gamma>0$, $\alpha<1/4$ and $\ep$ sufficiently small.\\
\indent Next we pass to the control of the recollision event. We proceed similarly as in \cite{DP} and in \cite{DR}. Let $t_i$ the first time the light particle hits the i-th scattering, $\eta_i^{-}$ the incoming velocity, $\eta_i^{+}$ the outgoing velocity and $t_i^{+}$ the exit time. Moreover we fix the axis in such a way that $\eta_i^{+}$ is parallel to the $x$ axis (see figure \ref{fig:7}). We have
\begin{equation} 
\chi_{rec}\leq \sum_{i=1}^{Q}\sum_{j>1}\chi_{rec}^{i,j},
\end{equation}
where  $\chi_{rec}^{i,j}=1$ if and only if $b_i$ (constructed via the sequence $t_1,\rho_1,\dots,t_i,\rho_i$) is recollided in the time interval $(t_j^{+},t_{j+1})$.\\
\begin{figure}
\centering
\includegraphics{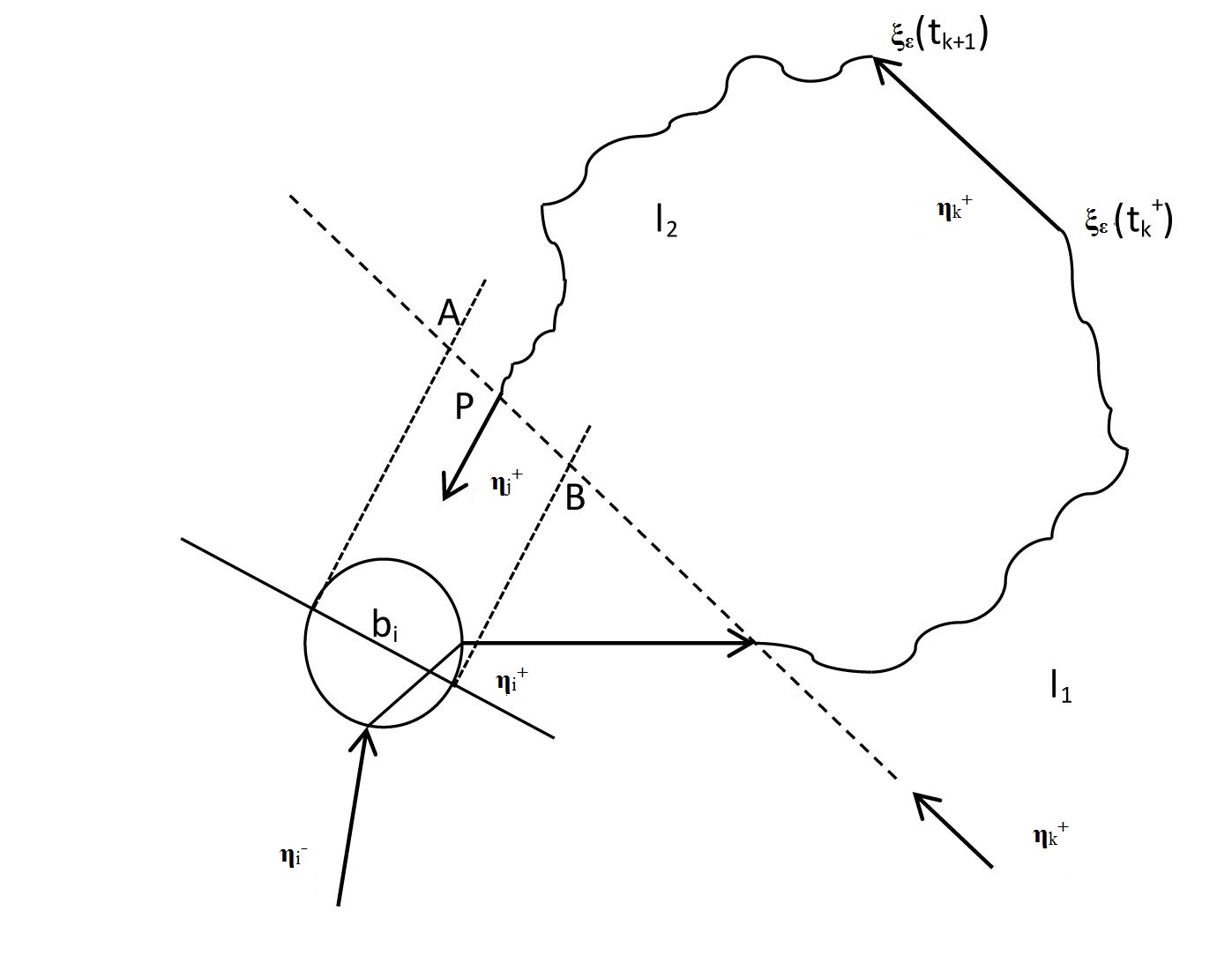}
\caption{}\label{fig:7}
\end{figure}

Note that, since $|\theta_i|\leq C\ep^{\alpha}$, where $\theta_i$ is the i-th scattering angle, in order to have a recollision it must be an intermediate velocity $\eta_k$, $k=i+1,\dots,j-1$ such that
\begin{equation}
\label{vel.cdz}
|\eta_{k}^{+}\cdot \eta_{j}^{+}|\leq C \ep^{\alpha}|v|^2,
\end{equation} 
namely $\eta_k^{+}$ is almost orthogonal to $\eta_j^{+}$ (see the figure). Then
\begin{equation} 
\chi_{rec}\leq \sum_{i=1}^{Q}\sum_{j=1}^{Q}\sum_{k=i+1}^{j-1}\chi_{rec}^{i,j,k},
\end{equation}
where $\chi_{rec}^{i,j,k}=1$ if and only if $\chi_{rec}^{i,j}=1$ and \eqref{vel.cdz} is fulfilled.\\
\indent Fix now all the parameters $\rho_1,\dots,\rho_{Q}$, $t_1,\dots,t_Q$ but $t_{k+1}$ and perform such a time integration. The two branches of the trajectory $l_1,l_2$ are rigid so that, if the recollision happen the time integration with respect to $t_{k+1}$ is restricted to a time interval proportional to $AB$. More precisely it is bounded by $$\frac{2\ep}{|v|\cos C\ep^{\alpha}}\leq \frac{2\ep}{|v|}.$$ 		
Performing all the other integrations and summing over $i,j,k$ we obtain
\begin{equation}
\begin{split}
\label{Markov error rec}
&\EE _{x,v}^{M}(\chi_{rec}f_0(\xi_{\ep}(-t),\eta_{\ep}(-t)))\\&
\leq \frac{4\ep}{|v|}e^{-2|v|\ep^{-2\alpha}t} \| f_0\|_{L^{\infty}} \sum_{Q\geq 3}(Q-1)(Q-2)(Q-3)\frac{(2|v|\ep^{-2\alpha})^Q}{(Q-1)!} t^{Q-1}\\&
\leq C|v|^3 t^3\ep^{1-8\alpha}\leq C|v|^3\ep^{\gamma}t^3,
\end{split}
\end{equation}
for some $\gamma>0$, $\alpha<1/8$ and $\ep$ sufficiently small.\\
\indent We finally estimate the event $\chi_{int}$. To do this we fix a sequence of parameters $\rho_1,\dots,\rho_{Q}$, $t_1,\dots,t_Q$. For instance consider the case in figure \ref{fig:8} in which we exhibit an unphysical trajectory.

\begin{figure}
\centering
\includegraphics[scale=0.8]{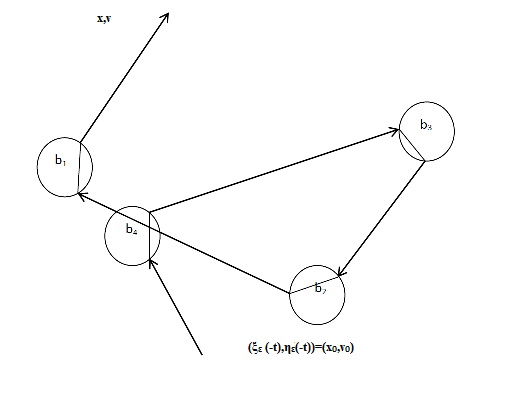}
\caption{}\label{fig:8}
\end{figure}

Consider the integral
\begin{equation}
I=\int_{B(0,M)} f_0(\xi(-t),\eta(-t))\chi_{int}\,dxdv.
\end{equation}
Here $\chi_{int}=1$ for those values of $x,v$ for which an interference takes place and
\begin{equation*}
\noindent B(0,M):=\{(x,v)\in \R^2\times\R^2;\;\text{s.t.}\; |x|^2+|v|^2<M\}. 
\end{equation*}
By the Liouville Theorem we can integrate over the variables $\big(\xi_{\ep}(-t),\eta_{\ep}(-t)\big)=(x_0,v_0)$ as independent variables
\begin{equation}
I=\int_{B_t(0,M)} f_0(x_0,v_0)\chi_{rec}\,dx_0dv_0,
\end{equation}
where 
\begin{equation*}
\noindent B_t(0,M)=\left\{\big(\xi_{\ep}(-t),\eta_{\ep}(-t)\big)\;\text{s.t.}\; (x,v)\in B(0,M)\right\}.
\end{equation*}
Note that $$\chi_{int}(x,v)=\chi_{rec}(x_0,v_0),$$ since a backward interference is a forward recollision.
Clearly
\begin{equation}
B_t(0,M)\subset B(0,M(1+t))
\end{equation}
where $B(0,M(1+t))$ is the ball of radius $M(1+t)$ in $\R^4$.\\
Thus 
\begin{equation}
\label{I 3}
I\leq\int_{B(0,M(1+t))}f_0(x_0,v_0)\chi_{rec}\,dx_0dv_0.
\end{equation}
Therefore, by using estimate \eqref{Markov error rec} and \eqref{I 3}
\begin{equation}
\begin{split}
\label{interferences}
&\int_{B(0,M)}\EE _{x,v}^{M}(\chi_{int}f_0(\xi_{\ep}(-t),\eta_{\ep}(-t)))\,dxdv\\&\leq C\ep^{\gamma}M^7(1+t)^7.
\end{split}
\end{equation}
This concludes the proof of Proposition \ref{th:prop}.

\section{Concluding Remarks}
\indent The diffusive limit analyzed in the present paper is suggested by the divergence of $B$ for the particular choice of the potential we are considering. However the same techniques could work in presence of a smooth, radial, short-range potential $\phi$.
\begin{theorem}\label{th:main th2}
Under the same hypothesis of Theorem \ref{th:main th}, assume $\phi\in C^2([0,1])$. Scale the variables, the density and the potential according to 
\begin{equation}\label{scaling 2}
\left\{
 \begin{array}{l}\vspace{0.2cm}
x\to\ep x\\ 
t\to\ep^{\lambda}\ep t\\
\mu_{\ep}=\ep^{-(2\alpha+\lambda+1)}\mu\\
\phi\to\ep^{\alpha}\phi.
\end{array} \right.
\end{equation}
Then, for $t>0$ and $\ep\to 0$, there exists $\lambda_0=\lambda(\alpha)$ s.t. for $\lambda<\lambda(\alpha)$
\begin{equation*}
f_{\ep}(x,v,t)\to \rho(x,t)
\end{equation*}
solution of the heat equation
\begin{equation}
\left\{
 \begin{array}{l}\vspace{0.2cm}
\partial_t\varrho=D\Delta\varrho\\ 
\varrho(x,0)=\langle f_0\rangle, 
\end{array} \right.
\end{equation}
with $D$ given by the Green-Kubo formula
\begin{equation}\label{GK2}
D=\frac{1}{\mu}|v|\int_{S_{|v|}}{v\cdot\big(-\Delta_{|_{S_{|v|}}}^{-1}\big)v\,dv}.
\end{equation}
The convergence is in $L^2(\R^2\times\R^2)$. 
\end{theorem}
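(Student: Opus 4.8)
The plan is to reproduce, step by step, the proof of item~3) of Theorem~\ref{th:main th}, now for the smooth potential $\ep^\alpha\phi$: the boosted density in \eqref{scaling 2} is chosen so that the linear Boltzmann limit produces a Landau equation with a diverging prefactor $\eta_\ep=\ep^{-\lambda}\to\infty$ (which forces $\lambda>0$), playing the role that $|\log\ep|$ played for the barrier, while the attendant time factor $\ep^{\lambda}\ep$ in \eqref{scaling 2} is precisely the diffusive rescaling that lets the heat equation emerge. As before the argument splits into a Markovian approximation of $f_\ep$, the Landau reduction, and the Hilbert--expansion passage from Landau to heat; the threshold $\lambda_0(\alpha)$ appears as the worst of the constraints imposed by the first two steps.

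\emph{Step 1 (Markovian approximation).} First I would run the change of variables \eqref{change var} and the construction of $\bar h_\ep,\,h_\ep$ exactly as in Sections~2--4, the only new input being the scattering data of $\ep^\alpha\phi$. Since $\phi\in C^2([0,1])$, the impulse (Born) analysis gives a scattering angle $\theta(\rho)$ depending smoothly on the impact parameter, with $|\theta(\rho)|\le C\ep^\alpha$ and $\ep^{-2\alpha}\!\int_{-1}^1 \theta(\rho)^2\,d\rho\to c_\phi$ finite; this is the only point where the smoothness of $\phi$ (as opposed to the barrier) enters, and it is what removes the logarithm of \eqref{coeff diff}. With the scaling \eqref{scaling 2} the per--collision rate is $\mu_\ep\ep|v|\sim\ep^{-(2\alpha+\lambda)}|v|$ and the Boltzmann time horizon is $\eta_\ep T=\ep^{-\lambda}T$, so the overlapping, recollision and interference bounds \eqref{error1}, \eqref{Markov error ov}, \eqref{Markov error rec}, \eqref{interferences} acquire extra powers of $\ep^{-\lambda}$. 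Re--running them, the recollision estimate is the most restrictive: its four collision--rate factors contribute $\ep^{-4(2\alpha+\lambda)}$, the three time integrations over the horizon give $\eta_\ep^{3}=\ep^{-3\lambda}$, and the recollision window one factor $\ep$, so the bound becomes $\sim\ep^{\,1-8\alpha-7\lambda}$, which vanishes iff $\lambda<(1-8\alpha)/7$ --- positive exactly when $\alpha<1/8$, recovering the dimensional restriction of Theorem~\ref{th:main th}. One thus obtains the analogue of Proposition~\ref{th:prop}, valid up to times $\eta_\ep T$ as in Remark~\ref{th:remarkprop}, while the instantaneous--collision error \eqref{phi2} is re--estimated as before using that each collision lasts a time $O(\ep/|v|)$.

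\emph{Step 2 (Landau reduction).} Writing the Boltzmann operator as $L_\ep=\eta_\ep\tilde L_\ep$ with $\eta_\ep=\ep^{-\lambda}$, I would show $\tilde L_\ep\to\LL$, the Landau operator with the \emph{finite} coefficient \eqref{coeff diff}. This is the computation already in the proof of Proposition~\ref{th:prop 2}: Taylor--expanding $g(v')-g(v)$ in $\theta$ and using the symmetry of the kernel, the leading term is $\tfrac12\Delta_{|_{S_{|v|}}}g\cdot(\mathrm{rate})\!\int\! d\rho\,\theta^2$ with $\ep^{-2\alpha}\!\int_{-1}^1 d\rho\,\theta^2\to c_\phi$, while the quartic remainder satisfies $\ep^{-2\alpha}\!\int_{-1}^1 d\rho\,|v-v'|^4\sim\ep^{2\alpha}$. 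After rescaling time by $\eta_\ep$ the function $\h$ solves an equation of the form \eqref{BE1} (now with $\eta_\ep=\ep^{-\lambda}$), and the energy estimate of Proposition~\ref{th:prop 2} gives $\tfrac12\partial_t\|\h-g_{\eta_\ep}\|\le\eta_\ep^{2}\,\|(\tilde L_\ep-\LL)g_{\eta_\ep}\|\lesssim\ep^{\,2\alpha-2\lambda}$. Here, in contrast with the barrier case where $\eta_\ep=|\log\ep|$ is only logarithmic, the factor $\eta_\ep^{2}=\ep^{-2\lambda}$ is a genuine power and forces the \emph{second} constraint $\lambda<\alpha$; together with Step~1 this yields $\lambda_0(\alpha)=\min\{\alpha,\,(1-8\alpha)/7\}$ on $(0,1/8)$.

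\emph{Step 3 (Landau to heat) and conclusion.} Finally I would invoke Lemma~\ref{th:diff}: the solution $g_\eta$ of the rescaled Landau equation \eqref{rescLandau} converges, as $\eta=\eta_\ep\to\infty$, to the solution of the heat equation with the Green--Kubo coefficient $D$ of \eqref{GK2}; this step depends only on $\LL$ and is therefore identical to the barrier case. Combining Step~2 with Step~1 and the ordering/mass--conservation argument ($\bar f_\ep\le f_\ep$, with equality at $t=0$ and total mass preserved by the linear Boltzmann flow) used at the end of Section~3 transfers the convergence from $\h$ to $f_\ep$ in $L^2(\R^2\times\R^2)$, which is the assertion. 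I expect the main obstacle to be the determination of $\lambda_0(\alpha)$ as the minimum of the competing thresholds: the recollision bound demands checking that the smooth scattering map creates no new recolliding geometries and pinning its exponent down exactly, while the constraint $\lambda<\alpha$ is the price of replacing the logarithmic barrier divergence by a power law. By contrast $\tilde L_\ep\to\LL$ is routine once the Born estimates on $\int\theta^2$ and $\int|v-v'|^4$ are in hand, and Step~3 is quoted verbatim from Lemma~\ref{th:diff}.
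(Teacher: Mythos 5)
Your three-step architecture is exactly the paper's intended proof: the paper argues that $f_\ep$ ``almost'' solves the Boltzmann/Landau equation with the diverging prefactor $\ep^{-\lambda}$ in place of $|\log\ep|$, so that the Hilbert-expansion machinery of Section 3 (Lemma \ref{th:diff}, Proposition \ref{th:prop 2}) applies verbatim, while the Markovian approximation is obtained by re-running the Section 4 estimates with collision rate $\ep^{-(2\alpha+\lambda)}$ over the horizon $t\sim\ep^{-\lambda}$. However, your Step 1 contains a concrete error in the bookkeeping that determines $\lambda_0$: the recollision estimate is \emph{not} the binding one. Your recollision count ($\ep$ from the collision window, four rate factors, three time integrations, giving $\ep^{1-8\alpha-7\lambda}$ and the condition $\lambda<(1-8\alpha)/7$) is consistent with \eqref{Markov error rec}, but the interference event is controlled in Section 4 by the Liouville argument, which converts $\chi_{int}$ into $\chi_{rec}$ at the price of integrating over the ball $B(0,M(1+t))\subset\R^4$; this contributes an extra $(1+t)^4$ volume factor, so that \eqref{interferences} carries $(1+t)^7$ in total. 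On the diffusive horizon $t\sim \ep^{-\lambda}$ this costs $\ep^{-7\lambda}$ against the static exponent $\gamma=1-8(\alpha+\tfrac\lambda 2)=1-8\alpha-4\lambda$, and the binding condition is $\gamma-7\lambda>0$, i.e. $\lambda<\frac{1-8\alpha}{11}$, as the paper states. For $\lambda\in\big(\tfrac{1-8\alpha}{11},\tfrac{1-8\alpha}{7}\big)$ your Step 1 fails: the interference error does not vanish under these estimates. Since the theorem only asserts the existence of some $\lambda_0(\alpha)>0$, the slip is reparable by shrinking your threshold, but the value $\lambda_0=\min\{\alpha,\tfrac{1-8\alpha}{7}\}$ you claim is not established by this method.

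On the other side of the ledger, your Step 2 constraint $\lambda<\alpha$ is a point where you are more explicit than the paper. With $\eta_\ep=\ep^{-\lambda}$ a genuine power rather than a logarithm, the energy-estimate error $\eta_\ep^{2}\,\|(\tilde{\text{L}}_\ep-\LL)g_{\eta_\ep}\|\sim\ep^{2\alpha-2\lambda}$ indeed vanishes only for $\lambda<\alpha$, whereas in the barrier case $\eta_\ep=|\log\ep|$ is absorbed by any positive power of $\ep$; the paper's sketch (``the arguments of Section 3 do apply'') silently subsumes this, and making it explicit is correct and necessary. Note that with the corrected pathological-set threshold the answer becomes $\lambda_0=\min\{\alpha,\tfrac{1-8\alpha}{11}\}$, and for small $\alpha$ (roughly $\alpha<1/19$) it is your Landau-reduction constraint, not the recollision/interference one, that dominates.
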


The significance and the proof of the above theorem is clear. The kinetic regime describes the system for kinetic times $O(1)$. One can go further to diffusive times provided that $\lambda$ is not too large. Indeed the distribution function $f_{\ep}$ ``almost'' solves 
\begin{equation}
\begin{split}
\big(\ep^{\lambda}\partial_t+v\cdot\nabla_x\big)f_{\ep}&\approx\ep^{-2\alpha-\lambda}\text{L}_{\ep}f_{\ep}\\&
\approx \ep^{-\lambda}c\Delta_{|_{S_{|v|}}}f_{\ep},
\end{split}
\end{equation}
for which the arguments of Section 3 do apply. 
In other words there is a scale of time for which the system diffuses. However such times are not so large to prevent the Markov property. 
Obviously the diffusion coefficient is computed in terms of the limiting Markov process.
We can give an estimate , certainly not optimal, of the coefficient $\lambda$ appearing in \eqref{scaling 2}. Estimating recollisions and interferences as in Section 4, setting $\gamma=1-8(\alpha+\frac{\lambda}{2})$, the condition on $\lambda$ is (see \eqref{interferences})
$$\gamma-7\lambda>0\quad\text{i.e.}\quad \lambda<\frac{1-8\alpha}{11}.$$
\indent Although the scaling we are considering in Theorem \eqref{th:main th2} is quite particular, the aim is the same as in \cite{LE} where the same problem has been approached for the weak-coupling limit ($\alpha=\frac{1}{2}$) of a quantum system.\\
\indent Recently we were aware of a result concerning the diffusion limit of a test particle of a hard-core system at thermal equilibrium \cite{BGS-R}.
Also in this case the quantitative control of the pathological trajectories allows to reach larger times in which a diffusive regime is outlined.

\vspace{10mm}
\indent\textbf{Acknowledgments.}\\ We are indebted to S. Simonella and H. Spohn for illuminating discussions.
\vspace{10mm}
\appendix
\section{Appendix (on the scattering problem associated to a circular potential barrier)} \label{scatt}
\setcounter{equation}{0}    
\def\theequation{A.\arabic{equation}}

\noindent The potential energy for a finite potential barrier is given by
\begin{equation}
\label{pot}
 \phi( r) =
  \begin{cases}
   \phi_0 & \text{if } r \leq 1 \\
   0       & \text{if } r >1
  \end{cases}
\end{equation}
The light particle, of unitary mass, moves in a straight line with energy  $E=\frac{1}{2}v^2>\phi_0$. Let $\rho$ be the impact parameter. 
For small impact parameters the particle will pass through the barrier, for large ones the particle will be reflected.  
Inside the barrier the velocity is a constant $v=\bar v$ ($\bar v<v$).
The complete trajectory of the light particle which passes through the barrier consists of three straight lines and is symmetrical about a radial line perpendicular to the interior path.

Let $\alpha$ be the angle of incidence (the inside angle between the trajectory and a radial line to the point of contact with tha barrier at $r=1$) and $\beta$ the angle of refraction (the corresponding external angle). We assume that the radius of the circle is $r=1$. 
According to the geometry of the problem $\alpha$ and $\beta$ are such that $$\sin\beta=\frac{v}{\bar v}\sin\alpha$$
where $\sin\alpha=\rho$.

The angle of deflection is $\theta=2(\beta-\alpha).$
Thanks to the energy and angular momentum conservation the expression for the refractive index becomes
\begin{equation}
\label{refractive index 1 }
n=\frac{\sin\alpha}{\sin\beta}=\frac{\bar v}{v}=\sqrt{1-\frac{2\phi_0}{v^2}}
\end{equation}
and so we have a scattering angle defined in the following way: 
\begin{equation}
\label{scattering angle}
\theta(\rho)=
\begin{cases}
2\left(\arcsin\left(\frac{\rho}{n}\right)-\arcsin(\rho)\right)  & \text{if } \rho \leq n \\
2\arccos(\rho)       & \text{if } \rho>n.
  \end{cases}
\end{equation}
In the first case the particle passes through the barrier (for $\rho\leq n$), and in the second one the particle is reflected (for $\rho> n$). The maximum scattering angle $\theta_{\max}=2\arccos(n)$ is the angle at which the particle scatters tangentially to the barrier.
The differential scattering cross section 
$$\Psi(\theta)=\left|\frac{\partial\rho}{\partial\theta}\right|$$
is then:
\begin{equation}\displaystyle
\label{cross section}
\Psi(\theta)=
\begin{cases}
\frac{n[\cos(\theta/2)-n][1-n\cos(\theta/2)]}{(1+n^2-2n\cos(\theta/2))^{3/2}}  & \text{if }\theta\leq 2\arccos(n) \\
[1-\cos^2(\theta/2)]^{1/2}       & \text{otherwise } 
  \end{cases}
\end{equation}

Scaling now the potential as $\phi( r) \to \ep^{\alpha}\phi( r)$, the previous formulas still hold. Thus, according to this scaling, the refractive index becomes,  
\begin{equation}
\label{refractive index scaled }
n_{\ep}=\sqrt{1-\frac{2\ep^{\alpha}\phi_0}{v^2}}
\end{equation} 
to replace into \eqref{scattering angle}, and \eqref{cross section}.

\section{Appendix (on the diffusion coefficient)} \label{diff}
\setcounter{equation}{0}    
\def\theequation{B.\arabic{equation}}

\noindent
In this section we show that the diffusion coefficient is divergent for the circular potential barrier \eqref{pot}. 
At this level we assume that $\phi_0=1$ to simplify the following expressions.

We need to compute
\begin{equation}
\label{diffusion coefficient scaled}
\tilde{B}:=\lim_{\ep\to 0}{\frac{\mu\ep^{-2\alpha}}{2}|v|\int_{-1}^{1}{\theta^2(\rho)\,d\rho}}.
\end{equation}
Thanks to the symmetry for the scattering problem
\begin{equation}
\label{B1}
\ep^{-2\alpha}\int_{-1}^{1}{\theta^2(\rho)\,d\rho}=2\ep^{-2\alpha}\int_{0}^{1}{\theta^2(\rho)\,d\rho}.
\end{equation}
According to \eqref{scattering angle}:
\begin{equation}
\label{B2}
2\ep^{-2\alpha}\int_{0}^{1}{\theta^2(\rho)\,d\rho}=2\ep^{-2\alpha}\left(\int_{0}^{n_{\ep}}{\theta^2(\rho)\,d\rho}+\int_{n_{\ep}}^{1}{\theta^2(\rho)\,d\rho}\right)
\end{equation}

Our aim is to perform a Taylor expansion of the first branch of $\theta(\rho)$ for $\rho\geq 0$, \\$\rho/n_{\ep}<(1-\delta)$, with $\delta>0$. We have
$$\arcsin\left(\rho/n_{\ep}\right)=\arcsin(\rho)+\frac{1}{\sqrt{1-\rho^2}}\left(\frac{\rho}{n_{\ep}}-\rho\right)+R_1\left(\rho/n_{\ep}\right),$$
where 
\begin{equation}
R_1\left(\rho/n_{\ep}\right)=\frac{\bar{\rho}}{2(1-\bar{\rho}^2)^{\frac{3}{2}}}\left(\frac{\rho}{n_{\ep}}-\rho\right)^2\quad \rho<\bar{\rho}<\frac{\rho}{n_{\ep}}.
\end{equation}
Then, looking at the first integral in the r.h.s of \eqref{B2}, we have to split it as
\begin{equation*}
\label{B2*}
\ep^{-2\alpha}\int_{0}^{n_{\ep}}{\theta^2(\rho)\,d\rho}=\underbrace{\ep^{-2\alpha}\int_{0}^{n_{\ep}(1-\delta)}{\theta^2(\rho)\,d\rho}}_{A}+\underbrace{\ep^{-2\alpha}\int_{n_{\ep}(1-\delta)}^{n_{\ep}}{\theta^2(\rho)\,d\rho}}_{B}.
\end{equation*}
Thus
\begin{equation}
\label{B3}
\begin{split}\displaystyle
&A=\ep^{-2\alpha}\int_{0}^{n_{\ep}(1-\delta)}{[2(\arcsin\left(\rho/n_{\ep}\right)-\arcsin(\rho))]^2\,d\rho}\\&
\leq4\ep^{-2\alpha}\left[\int_{0}^{n_{\ep}(1-\delta)}{\frac{(1-n_{\ep})^2}{n_{\ep}^2}\frac{\rho^2}{(1-\rho^2)}\,d\rho}+\int_{0}^{n_{\ep}(1-\delta)}{R_1\left(\rho/n_{\ep}\right)^2\,d\rho}\right]+\\&4\ep^{-2\alpha}\left[\left(\int_{0}^{n_{\ep}(1-\delta)}{R_1\left(\rho/n_{\ep}\right)^2}\right)^{\frac{1}{2}}\left(\int_{0}^{n_{\ep}(1-\delta)}{\frac{(1-n_{\ep})^2}{n_{\ep}^2}\frac{\rho^2}{(1-\rho^2)}\,d\rho}\right)^{\frac{1}{2}}\right].
\end{split}
\end{equation}

It is sufficient to compute the first two integrals . Let $A_1$ and $A_2$ be the first and the second integrals respectively. We have
\begin{equation}
\label{B4}
\begin{split}
&A_1=\ep^{-2\alpha}\frac{(1-n_{\ep})^2}{n_{\ep}^2}\int_{0}^{n_{\ep}(1-\delta)}{\frac{\rho^2}{{1-\rho^2}}\,d\rho}\\&
=-\frac{\ep^{-2\alpha}}{2}\frac{(1-n_{\ep})^2}{n_{\ep}^2}[2n_{\ep}(1-\delta)+\log(1-n_{\ep}(1-\delta))-\log(1+n_{\ep}(1-\delta))].
\end{split}
\end{equation}
Using that $n_{\ep}=1-\frac{\varepsilon^{\alpha}}{|v|^2}+o(\varepsilon^{2\alpha})$, from \eqref{B4} it is clear that 
\begin{equation*}
A_1\simeq -\frac{\ep^{-2\alpha}}{2}(1-n_{\ep})^2(\log(1-n_{\ep}(1-\delta)))=-\frac{\ep^{-2\alpha}}{2}\left(\frac{\ep^{2\alpha}}{|v|^4}\right)\log(\ep^{\alpha}(1-\delta)+\delta).
\end{equation*}
A straightforward computation shows that the right hand side of the previous expression is
\begin{equation}
\label{B4*}
\begin{split}
& 
-\frac{\ep^{-2\alpha}}{2}\frac{\ep^{2\alpha}}{|v|^4}\Big(\log(\ep^{\alpha})+\log(1-\delta+\frac{\delta}{\ep^{\alpha}})\Big)\\&
=-\frac{1}{2|v|^4}\Big(\log(\ep^{\alpha})+\delta(1-\frac{1}{\ep^{\alpha}})\Big)\\&
=\frac{\alpha}{2|v|^4}|\log(\ep)|\Big(1+\frac{\delta(1-\frac{1}{\ep^{\alpha}})}{|\log(\ep^{\alpha})|}\Big).
\end{split}\end{equation}
Choosing $\delta=\frac{\ep^{\alpha}}{|\log{\ep}|^{\gamma}}$ with $\gamma\in(0,\alpha/2)$, it follows $\delta/\ep^{\alpha}\underset{\ep\to 0}{\longrightarrow} 0$. 

In order to compute $A_2$, we need the following estimate for the remainder term
\begin{equation}
|R_1\left(\rho/n_{\ep}\right)|\leq \frac{1}{2} \frac{\rho}{n_{\ep}}\frac{1}{(1-\frac{\rho^2}{n_{\ep}^2})^{\frac{3}{2}}}\left(\frac{\rho}{n_{\ep}}-\rho\right)^2.
\end{equation}
Then
\begin{equation}
\begin{split}
A_2&\leq\frac{\ep^{-2\alpha}}{4}\int_{0}^{n_{\ep}(1-\delta)}{ \frac{\rho^2}{n_{\ep}^2}\frac{1}{(1-\frac{\rho^2}{n_{\ep}^2})^3}\left(\frac{\rho}{n_{\ep}}-\rho\right)^4\,d\rho}
\\&{\underset{u=\frac{\rho}{n_{\ep}}}{=}} \ep^{-2\alpha}n_{\ep}\int_{0}^{1-\delta}{\frac{u^2}{2(1-u^2)^3}u^4(1-n_{\ep})^4\,du}\\&{\underset{v=1-u}{=}}\frac{\ep^{-2\alpha}n_{\ep}}{2}\int_{\delta}^{1}{\frac{(1-v)^6}{v^3}(1-n_{\ep})^4\,dv}\simeq \frac{\ep^{-2\alpha}n_{\ep}}{2}\frac{(1-n_{\ep})^4}{\delta^2}.
\end{split}
\end{equation}
Also in this case, the only significant contribution is given by 
\begin{equation*}
\frac{\ep^{-2\alpha}(1-n_{\ep})^4}{\delta^2}\simeq \frac{\ep^{-2\alpha}\ep^{4\alpha}}{\delta^2}\underset{\ep\to 0}{\longrightarrow} 0
\end{equation*}
again for $\delta=\frac{\ep^{\alpha}}{|\log{\ep}|^{\gamma}}$ with $\gamma\in(0,\alpha/2)$. This shows that 
\begin{equation*}
A=A_1(1+\mathcal O(\ep)).
\end{equation*}

Now we compute $B$ in \eqref{B2}, namely
\begin{equation}
\label{B5}
\begin{split}
&B=\ep^{-2\alpha}\int_{n_{\ep}(1-\delta)}^{n_{\ep}}{[2(\arcsin\left(\rho/n_{\ep}\right)-\arcsin(\rho))]^2\,d\rho}\\&=\ep^{-2\alpha}\int_{n_{\ep}(1-\delta)}^{n_{\ep}}\left(\int_{\rho}^{\frac{\rho}{n_{\ep}}}\,dx \frac{1}{\sqrt{1-x^2}}\right)^2\,d\rho.
\end{split}
\end{equation}
Since
\begin{equation*}
\begin{split}\displaystyle
&\int_{\rho}^{\frac{\rho}{n_{\ep}}}\,dx\frac{1}{\sqrt{1-x^2}}=\int_{\rho}^{\frac{\rho}{n_{\ep}}}\,dx\frac{1}{\sqrt{(1-x)}\sqrt{(1+x)}}\\&
\leq\frac{1}{1+\rho}\int_{\rho}^{\frac{\rho}{n_{\ep}}}\,dx\frac{1}{\sqrt{(1-x)}}\underset{u=1-x}{=}\int_{1-\frac{\rho}{n_{\ep}}}^{1-\rho}\frac{1}{\sqrt{u}}\\&
=\frac{1}{\sqrt{(1-\rho)}}\left(\sqrt{(1-\rho)}-\sqrt{(1-\rho/n_{\ep})}\right)
\end{split}
\end{equation*}
in \eqref{B5} we have
\begin{equation*}
\begin{split}\displaystyle
&B=\ep^{-2\alpha}\int_{n_{\ep}(1-\delta)}^{n_{\ep}}\frac{1}{(1-\rho)}\left(\sqrt{(1-\rho)}-\sqrt{(1-\rho/n_{\ep})}\right)^2\,d\rho\\&
\underset{(1-\frac{\rho}{n_{\ep}}<1-\rho)}{\leq}\ep^{-2\alpha}\int_{n_{\ep}(1-\delta)}^{n_{\ep}}(\rho/n_{\ep}-\rho)\,d\rho\\&=\ep^{-2\alpha}n_{\ep}(1-n_{\ep})[1-(1-\delta)^2]\simeq \ep^{-2\alpha}\ep^{2\alpha}\delta.
\end{split}
\end{equation*}
Again, with the previous choice for $\delta$, this term vanishes in the limit for $\ep\to 0$.

The second integral in the right hand side of \eqref{B2} reads 
\begin{equation}
\begin{split}
&\ep^{-2\alpha}\int_{n_{\ep}}^{1}{\theta^2(\rho)\,d\rho}=\ep^{-2\alpha}\int_{n_{\ep}}^{1}{(\pi-2\arcsin(\rho))^2\,d\rho}
\\&\simeq\ep^{-2\alpha}(1-n_{\ep})^2\simeq \ep^{-2\alpha}\frac{\varepsilon^{2\alpha}}{|v|^2}=\frac 1{|v|^2}.
\end{split}\end{equation}
Therefore the only contribution in the limit is the one given by \eqref{B4*} and we obtain
\begin{equation}
\tilde{B}:=\lim_{\ep\to 0}{\frac{\mu\ep^{-2\alpha}}{2}|v|\int_{-1}^{1}{\theta^2(\rho)\,d\rho}}=\lim_{\ep\to 0}\mu\left[\frac{2\alpha}{|v|^3}|\log(\ep)|\right]=+\infty,
\end{equation}
and finally
\begin{equation*}
B:=\lim_{\ep\to 0}{\frac{\mu\ep^{-2\alpha}}{2|\log\ep|}|v|\int_{-1}^{1}{\theta^2(\rho)\,d\rho}}=\frac{2\alpha}{|v|^3}\mu.
\end{equation*}


\end{document}